\newcommand{\restateref}[1]{\IfAppendix{\hyperref[#1]{$\star$}}{\hyperref[#1*]{$\star$}}}
\newcommand{\oh}{\ensuremath{\mathcal{O}}}
\newcommand{\p}{\ensuremath{\mathcal{P}}}
\newcommand{\tp}{\ensuremath{\mathcal{\widetilde P}}}
\spnewtheorem{myclaim}{Claim}{\bfseries}{\itshape}
\renewcommand{\qed}{\hfill$\Box$}
\crefname{figure}{Fig.}{Figs.}
\crefname{theorem}{Thm.}{Thms.}
\crefname{lemma}{Lem.}{Lems.}
\crefname{corollary}{Cor.}{Cors.}
\crefname{observation}{Obs.}{Obs.}
\crefname{section}{Sec.}{Secs.}
\begin{document}
	\title{Hypergraphs as Metro Maps: Drawing Paths with Few Bends in Trees, Cacti, and Plane 4-Graphs}
	\titlerunning{Visualizing Hypergraphs as Metro Maps}
	\author{Sabine Cornelsen\inst{1}\orcidlink{0000-0002-1688-394X}
		\and
		Henry F\"orster\inst{2}\orcidlink{0000-0002-1441-4189} \and
		Siddharth Gupta\inst{3}\orcidlink{0000-0003-4671-9822} \and
		Stephen Kobourov\inst{2}\orcidlink{0000-0002-0477-2724}
		\and
		Johannes Zink\inst{2}\orcidlink{0000-0002-7398-718X}
	}
	\authorrunning{S. Cornelsen et al.}
	\institute{University of Konstanz, Germany
		\and
		TU Munich, Heilbronn, Germany
		\and
		BITS Pilani, K K Birla Goa Campus, India
		}
	\maketitle

\begin{abstract}
A \emph{hypergraph} consists of a set of vertices and a set of subsets of vertices, called \emph{hyperedges}.
In the metro map metaphor, each hyperedge is represented by a path (the metro line) and the union of all these paths is the support graph (metro network) of the hypergraph. 
Formally speaking, a \emph{path-based support} is a graph together with a set of paths. We consider the problem of 
constructing 
drawings of path-based 
supports that
\begin{inparaenum}[(i)]
\item minimize the 
sum of the number of bends on all paths, 
\item minimize the
maximum number of bends on any path, or
\item maximize the number of 0-bend paths, then the number of 1-bend paths, etc. 
\end{inparaenum}
We concentrate on straight-line 
drawings of path-based tree and cactus supports as well as orthogonal drawings of path-based plane supports with maximum degree~4.

\keywords{hypergraphs \and metro map metaphor  \and bend minimization.}
\end{abstract}

\section{Introduction}

A \emph{hypergraph} $H=(V,A)$ is a set $V$ of vertices and a set $A$ of subsets of $V$, called hyperedges. Among others,
hypergraphs can be visualized as
bipartite graphs 
with vertices $V \cup A$, as \emph{Hasse diagrams} (upward drawing of the transitive reduction of the inclusion relation between hyperedges), as Euler diagrams (enclosing the vertices of each hyperedge by a simple closed curve), or as incidence matrices. For an overview on more visualization styles,
also see~\cite{alsallakh2016state,fischer:survey21}.
We consider the metro map metaphor, where each hyperedge~$h$ is visualized with a metro line along which the vertices in the hyperedge are the stations; see \cref{fig:metrosets}
for an examples of a hypergraph made by Simpson characters
and an example of a hypergraph constructed from the authors of this paper and its references.
That is, each metro line is a simple path $p_h$ with the same vertices as~$h$, the union of the paths 
is a \emph{path-based support} of the hypergraph
$H=(V,A)$: a \emph{support graph} $G=(V,E)$ and a set $\p=\{p_h;h\in A\}$ of paths of $G$ such that
each edge of $G$ is contained in at least one path of $\p$.
A \emph{cactus} is a connected graph where each 2-connected component is an edge (bridge) or a simple cycle. A \emph{plane 4-graph} is a
planar graph of maximum degree~4 with a fixed planar embedding.\footnote{%
    We use \emph{embedding} in the combinatorial sense: it prescribes
    at each vertex the order of incident edges (rotation system) and defines the outer face.
    We do not assume a given rotation system for trees and~cacti.}
We say that $G$ is a \emph{path-based planar support}, a \emph{path-based tree support}, a \emph{path-based cactus support}, or a \emph{path-based plane 4-graph support},  if $G$ is a planar graph, a tree, a cactus,
or a plane 4-graph, respectively.

\begin{figure}[t]
    \begin{subfigure}[b]{.35\textwidth}
        \centering
        \includegraphics[width=\linewidth,trim={2cm 0cm 26cm 1cm},clip]{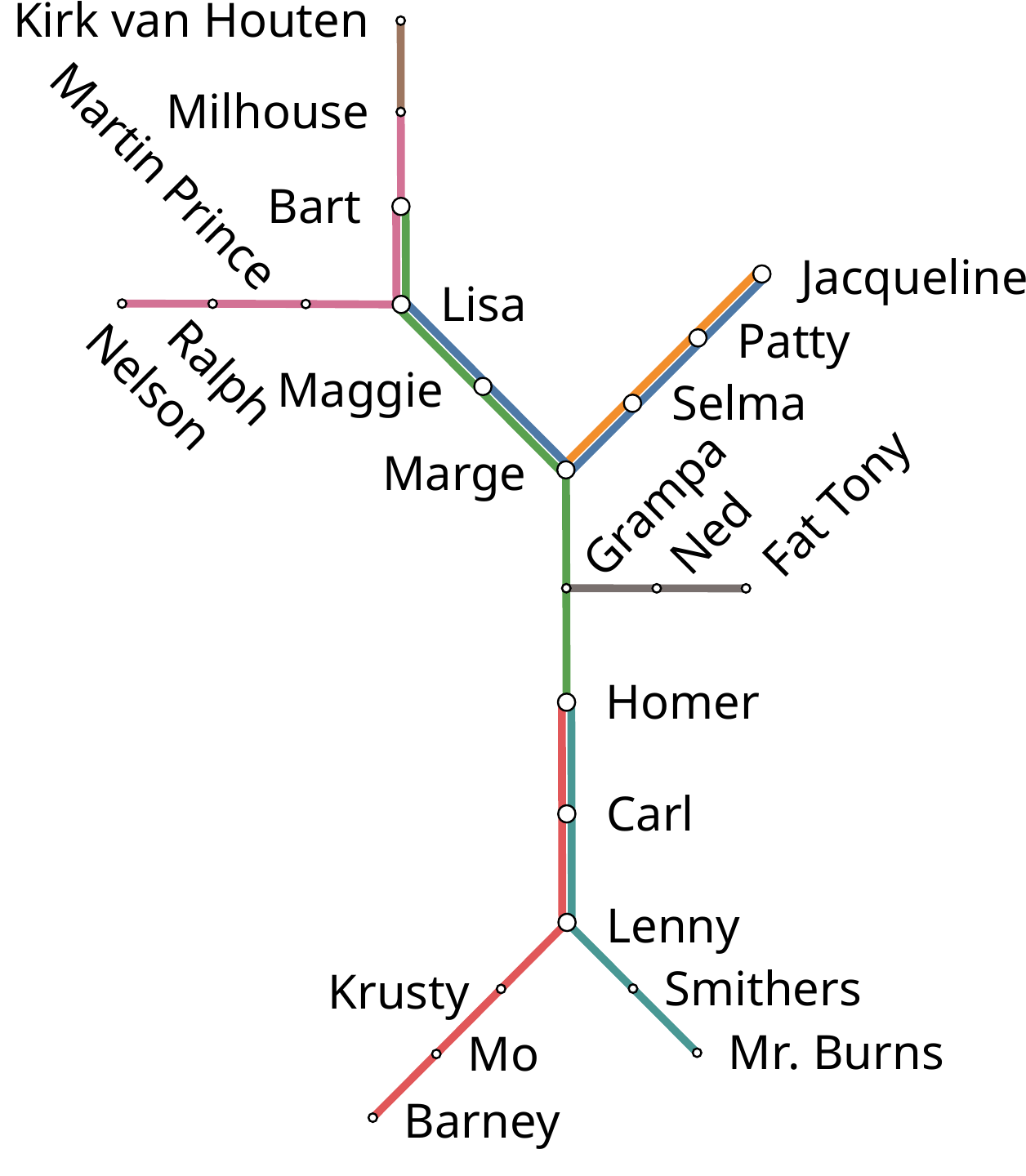}
        
        \smallskip
        
        \includegraphics[width=\linewidth,trim={32cm 0cm 4cm 20cm},clip]{img/simpsons-tree3.pdf}
        \subcaption{Hypergraph of 
        Simpson characters
        with a tree support.}
    \end{subfigure}
    \hfill
    \begin{subfigure}[b]{.62\textwidth}
        \centering
        \includegraphics[width=0.8\linewidth,trim={21cm 0 42.5cm 1cm},clip]{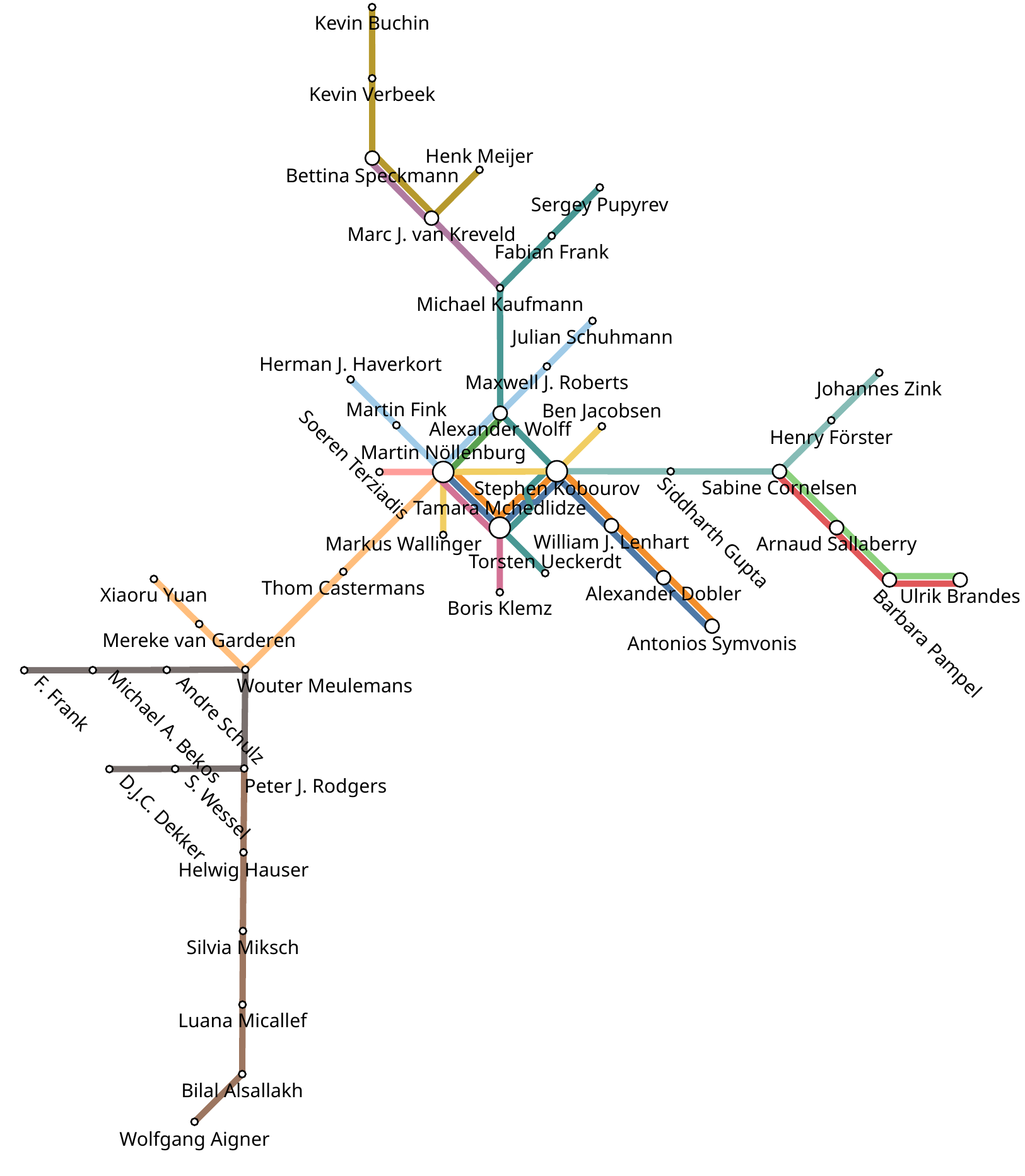}
        \subcaption{Largest component of the co-authorship network of this paper and its references.
        }
    \end{subfigure}
    \caption{%
    Metro map representations
    from~\href{http://metrosets.ac.tuwien.ac.at}{metrosets.ac.tuwien.ac.at}~\cite{metroMaps:ieee21}.}
\label{fig:metrosets}
\end{figure}

Planarity and bends are the most obvious measures of visual complexity for paths.
Let $n_i$ with $i = 0, 1, 2, \dots$ be the number 
of paths with $i$ bends.
We study how to draw a support graph subject to the following objectives (see also \cref{fig:caterpillar}):
\begin{enumerate}[label=(\roman*)]
    \item minimizing the \emph{total number of bends}, i.e.,
    $\sum_{i=0}^{\infty} i \cdot n_i$.
    \item minimizing the \emph{curve complexity}, i.e.,
    $\max_{i} i$ such that $n_i > 0$.
    \item lexicographically maximizing the \emph{bend vector}
    $(n_0, n_1, n_2, \dots)$,
    i.e., first maximizing the number $n_0$ of paths without bends,
    then maximizing the number $n_1$ of paths with one bend, then two bends, etc.
\end{enumerate}

We focus on straight-line drawings
where paths can only bend at vertices and \emph{orthogonal drawings} (see \cref{fig:4plane-example_intro})
where edges are drawn as polylines of horizontal and vertical line segments
and a path can bend at vertices and bends of edges.
The \emph{(planar) curve complexity}~of a path-based support is the minimum curve complexity over all its (planar) straight-line drawings. 

A special case are socalled \emph{linear hypergraphs/supports}   where any two hyperedges/paths share at most one vertex.
If in a planar embedding of a linear path-based support $(G, \p)$ no two paths touch, i.e., they are either disjoint or properly intersect (in a single vertex), the question if  $G$ is drawable such that no path has a bend corresponds to the $\exists\mathds R$-hard problem of pseudo-segment stretchability~\cite{schaefer/eppstein/gansner:gd09}. 
So, we
consider bend-minimization of path-based supports for restricted classes of graphs. Path-based tree supports (if they exist) have the advantage that (i) they can be computed in polynomial time~\cite{swaminathanWagner:siamJC94}, (ii) have the minimum number of edges among all supports, and (iii) display also intersections of any two hyperedges as a path. Cactus supports are a natural generalization of tree supports.

\begin{figure}[t]
    \centering
    \begin{minipage}[t]{0.46\linewidth}
        \centering
        \includegraphics[page=2]{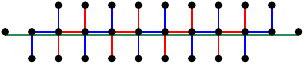}
        \subcaption{Min. the total number of bends and lexicographically max. the bend vector.}\label{fig:cat2}
    \end{minipage}\hfil
    \begin{minipage}[t]{0.46\linewidth}
        \centering
        \includegraphics[page=2]{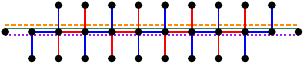} \subcaption{Lexicographically max. the bend vector.}\label{fig:cat2:2}
    \end{minipage}
    
    \medskip
    
    \begin{minipage}[t]{0.46\linewidth}
        \centering
        \includegraphics[page=1]{img/caterpillar}
        \subcaption{Min. the curve complexity.}\label{fig:catmax0}
    \end{minipage}
    \begin{minipage}[t]{0.46\linewidth}
        \centering
        \includegraphics[page=1]{img/caterpillar2}\hfil    \subcaption{ Min. the curve complexity and the total number of bends.}\label{fig:catmax0:2}
    \end{minipage}
    \caption{\label{fig:caterpillar}
        The caterpillars in (a) and (c) have the same set of paths.
        In (b) and (d) there are two more paths on the spine, affecting the optimality of the drawings.}
\end{figure}

\paragraph*{Related Work.}
Visualizing hypergraphs with the metro map metaphor has practical applications~\cite{metroMaps:ieee21} and gives rise to theoretical considerations~\cite{metroMapMetaphor:sofsem21}. Even though the choice of the paths for the hyperedges can impact the number of bends,
computing a visualization of a hypergraph in the metro map metaphor has often been considered as a combination of two formal problems 
of independent interest. 

The first problem is the computation of a suitable support graph. More precisely, given a hypergraph the goal is to decide if it admits an -- ideally path-based -- $\mathcal{G}$ support where $\mathcal{G}$ is a family of graphs. In this context, it is NP-complete to decide if there is a path-based planar support and NP-hard to find a path-based support with the minimum number of edges~\cite{Brandes:pathBased:da12}.
However, a path-based tree support~\cite{swaminathanWagner:siamJC94} and a (not necessarily path-based)\footnote{More generally, a graph is a \emph{support} of a hypergraph if every hyperedge induces a connected subgraph.} cactus support~\cite{blocks:iwoca10} can be constructed in polynomial time, if they exist.
Cactus supports can be used to store the set of all minimum cuts of an undirected graph \cite{dinitz:cactus76}. 
More work on computing planar supports can be found in~\cite{buchin_etal:jgaa11,planeSpatial:GMNY19,johnson/pollak:87,kaufmann:gd08,minimumTree:SWAT14}.

The second problem is to compute a high-quality geometric embedding for a given path-based support. The problem of embedding the support graph with few bends has been studied intensely from a practical point of view:
Bast, Brosi, and Storandt \cite{bastBrosiStorandt:2020} show how to  automatically generate metro maps with few bends on an octilinear grid using both, an exact ILP and a shortest-path based approximation algorithm. They also consider metro maps on triangular, octilinear, hexalinear, and ortho-radial grids~\cite{bastBrosiStorandt:sstd21}, while Nickel and N\"ollenburg~\cite{DBLP:conf/diagrams/NickelN20} generalized the mixed-integer programming approach of~\cite{noellenburg/wolff:2011} from octilinear grids to $k$ slopes through any vertex.  Hong et al.~\cite{hong_etal:2004} apply spring embedders to compute metro maps that mostly follow the octilinear grid. Fink et al.~\cite{fink_etql:GD12} guarantee bend-free metro-lines by using Bézier curves. 
Bekos et al.~\cite{bekos_etal:2022} study drawing supports for spatial hypergraphs such that the vertices are displaced on a rectangular or an ortho-radial grid. Among others they provide a simulated-annealing algorithm for tree-based supports and show that bend minimization is NP-hard.
From a theoretical point of view, Dobler et al.~\cite{dobler_etal:arxiv2024} consider the general problem of drawing hyperedges as lines or line-segments
where 
the order of the vertices of the hyperedges is not given. They show that it is $\exists\mathds R$-hard to decide whether a hypergraph admits such representations.
We remark that one of their proofs extends to the scenario where the order of vertices is fixed.

\paragraph*{Our Contribution.}

We investigate the second problem, i.e., minimizing the bends in a given path-based support graph -- in contrast to most earlier attempts, from a more theoretic point of view.

In \cref{sec:straightline}, we consider straight-line drawings of path-based  tree and cactus supports. For path-based tree supports one among our three objective functions can be optimized in polynomial time while the other two are hard:
Minimizing the curve complexity or lexicographically maximizing the bend vector is NP-hard (\cref{thm:treeGeneralCurveComplexityHard,thm:nphard-max0bendthen1bend}) even for very restricted classes of path-based tree supports. However, minimizing the total number of bends is polynomial-time solvable using  maximum weighted matching (\cref{thm:treeGeneral-total}). Deciding whether the curve complexity is 0 or 1 can also be done in polynomial time via a 2-SAT formulation (\cref{thm:treesGeneral-0-1}). 
We then study the parameterized complexity.
Deciding whether a path-based tree support has curve complexity $b$ is fixed-parameter tractable if parameterized by the number of paths (\cref{thm:treesGeneralFPT-paths}), the vertex cover number (\cref{thm:treesGeneralFPT-VC}), or the number of paths per vertex plus the curve complexity (\cref{thm:treesGeneralFPT-pathspervertex}).
The latter is a dynamic-programming approach that also yields FPT algorithms for lexicographically maximizing the bend vector (\cref{cor:maximizingNumberOf0BendAndSoOn}).

For a path-based cactus support we can test in near-linear time whether its planar curve complexity is zero (\cref{thm:cactusGeneral-0}).
Combining this with the dynamic program of \cref{thm:treesGeneralFPT-pathspervertex}, we get FPT algorithms for determining the curve complexity and for lexicographically maximizing the bend vector~of~cactus~supports.

Finally, in \cref{sec:orthogonal}, we show how to construct an orthogonal drawing of a path-based plane 4-graph support, minimizing the total number of bends on all paths. 
Proof details can be found in the appendix if marked with a~($\star$).

\section{Preliminaries}\label{sec:preliminaries}
We use standard graph terminology.
A \emph{binary tree} (of height~$h$) is a rooted tree
where every vertex has at most two children
(and there is no vertex with distance $h+1$ to the root).
A binary tree of height~$h$ is \emph{complete} if for $i \in \{0, \dots, h-1\}$,
there are $2^i$ vertices with distance~$i$ to the root
and \emph{perfect} if in addition it
has also $2^h$ leaves with distance~$h$ to the root.
A tree is a \emph{caterpillar} if removing its leaves makes it a path, its \emph{spine}.
All considered paths and cycles are \emph{simple}.

A problem~$\Pi$ is
\emph{fixed-parameter tractable} (FPT) or \emph{slice-wise polynomial} (XP)
parameterized by a parameter~$k$ if there is an algorithm
that solves an instance~$I$ of $\Pi$ in time
$\oh(f(k) \cdot |I|^c)$ or $\oh(|I|^{f(k)})$,
respectively, where $f$ is a computable function and $c$ is a~constant.

For a graph $G$, a vertex~$v$ and a subgraph~$C$ of~$G$,
we write $G-v$, $G+v$, and $G-C$ as a shorthand
for the graph arising from $G$ after removing~$v$,
adding~$v$, and removing all vertices of~$C$, respectively.
The size of a path-based support $(G,\mathcal P)$ is $\|\mathcal P\|:=\sum_{P \in \mathcal P}|P|$,
where $|P|$ is the number of edges on the path~$P$.
In a drawing of a path-based support,
we say that two incident edges are \emph{aligned} if they are contained in one line and disjoint except for their common endpoint.
An \emph{alignment (requirement)} of the set $E$ of edges of a support is a set of pairs of incident edges from $E$
such that for each vertex $v$ and for each edge $e$ incident to $v$ there is at most one other edge $e’$ incident to $v$ such that $e$ and $e’$ are a pair.
A \emph{realization} of an alignment is a drawing of the support in which the two edges of any pair are aligned.
\begin{figure}[t]
	\centering
	\begin{minipage}[b]{0.25\linewidth}
		\centering
		\includegraphics[page=5]{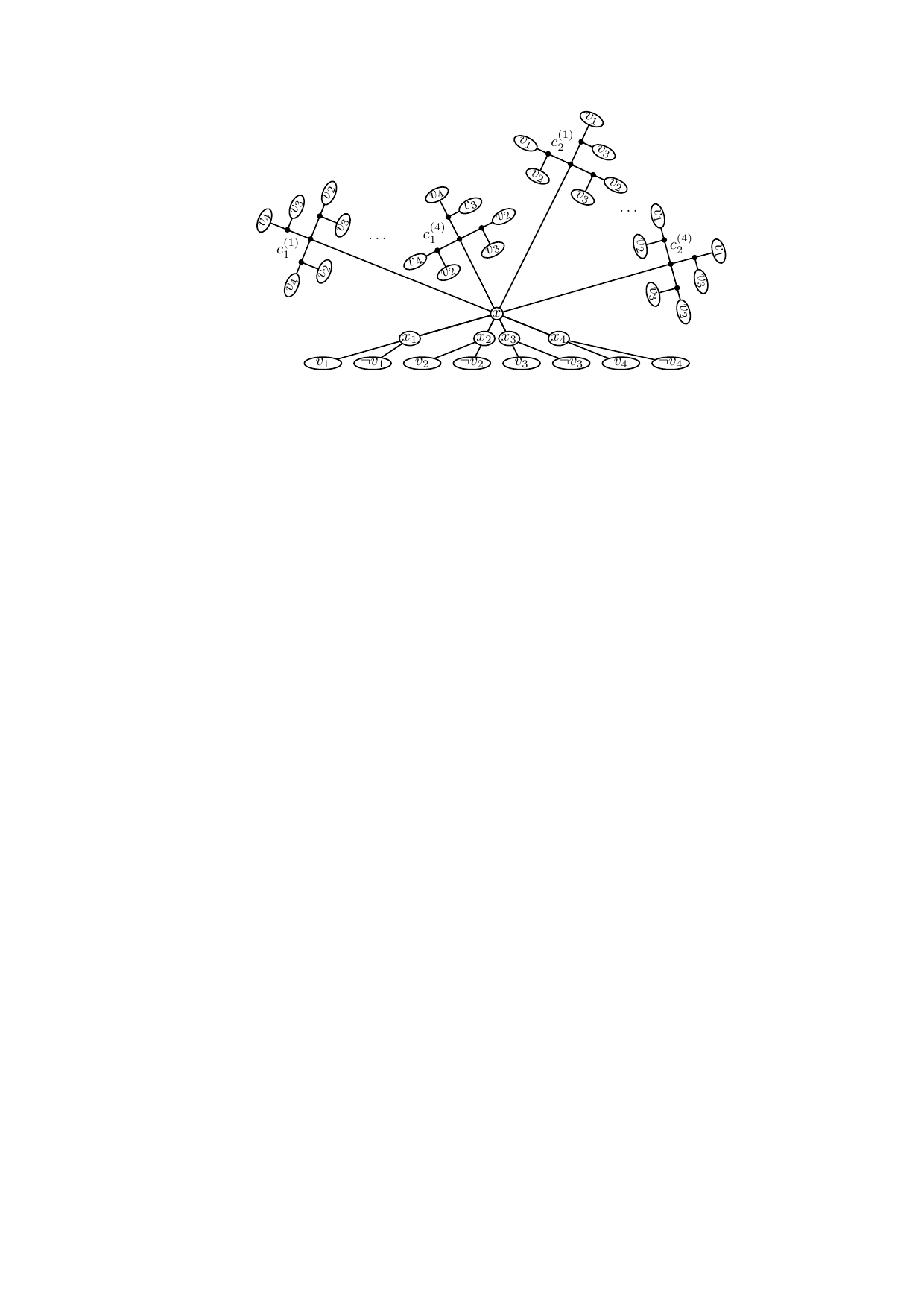}\hfil
		\subcaption{%
        \label{fig:cactusPlanarCC1}}
	\end{minipage}  \hfil
	\begin{minipage}[b]{0.25\linewidth}
		\centering
		\includegraphics[page=6]{img/3bends-trees-NP-hard.pdf}\\
		\subcaption{%
        \label{fig:cactusCC0}}
	\end{minipage}  \hfil
            \begin{minipage}[b]{0.45\textwidth}
        \centering
         \includegraphics[page=16]{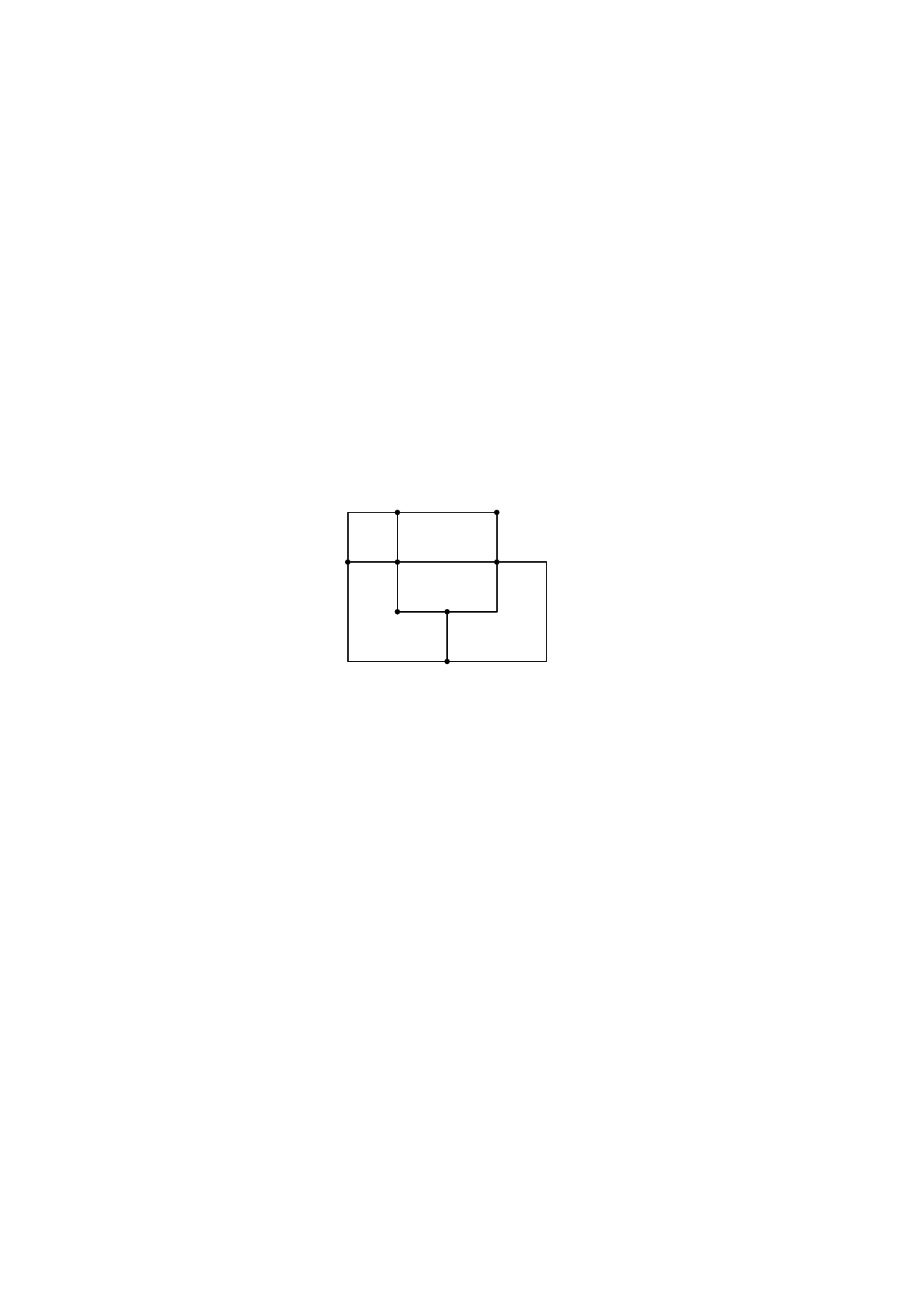}
    \subcaption{%
    \label{fig:4plane-example_intro}}
    \end{minipage}\hfil
	\caption{(a) Planar curve complexity 1 and (b) curve complexity 0 for the same cactus support. (c) orthogonal drawing of a plane 4-graph with 18 bends.}
	\label{fig:cactiPlanar1General0}
\end{figure}
Since we are not restricted to a specific grid, any alignment requirement for a path-based tree support can be realized in a planar way. Hence, the problem of drawing a path-based tree support is purely combinatorial as it suffices to compute a suitable alignment. 
In particular, the planar curve complexity of a tree equals its curve complexity. 
However, this is not true for path-based cactus supports (\cref{fig:cactiPlanar1General0}).
Moreover, each cycle needs at least three bends in any (non-degenerate) straight-line drawing.

\section{Straight-Line Drawings of Tree and Cactus Supports}\label{sec:straightline}

We first start with straight-line drawings of the support graph. 
The curve complexity in trees and cacti is in general unbounded: 
In a complete binary tree 
with paths from each leaf to the root, some path must bend at each inner vertex. Thus, in a complete binary tree with $n$ vertices there is a path with $\lfloor \log_2 n \rfloor -2 \in \Omega(\log n)$ bends; see \cref{fig:completeBinary}. In the cactus induced by the two paths $P: \langle v_0,v_1, v_2, \dots,v_{n-1} \rangle$ and $Q: \langle v_0,v_2,v_4,\dots v_{2\cdot \lfloor\frac{n-1}{2} \rfloor}\rangle$, the path $P$ bends at least $\lfloor \frac{n-1}{2} \rfloor \in \Omega(n)$ times; see \cref{fig:cactus}.

\begin{figure}[t]
	\centering
	\begin{minipage}[b]{0.6\linewidth}
		\includegraphics[page=3,scale=.95]{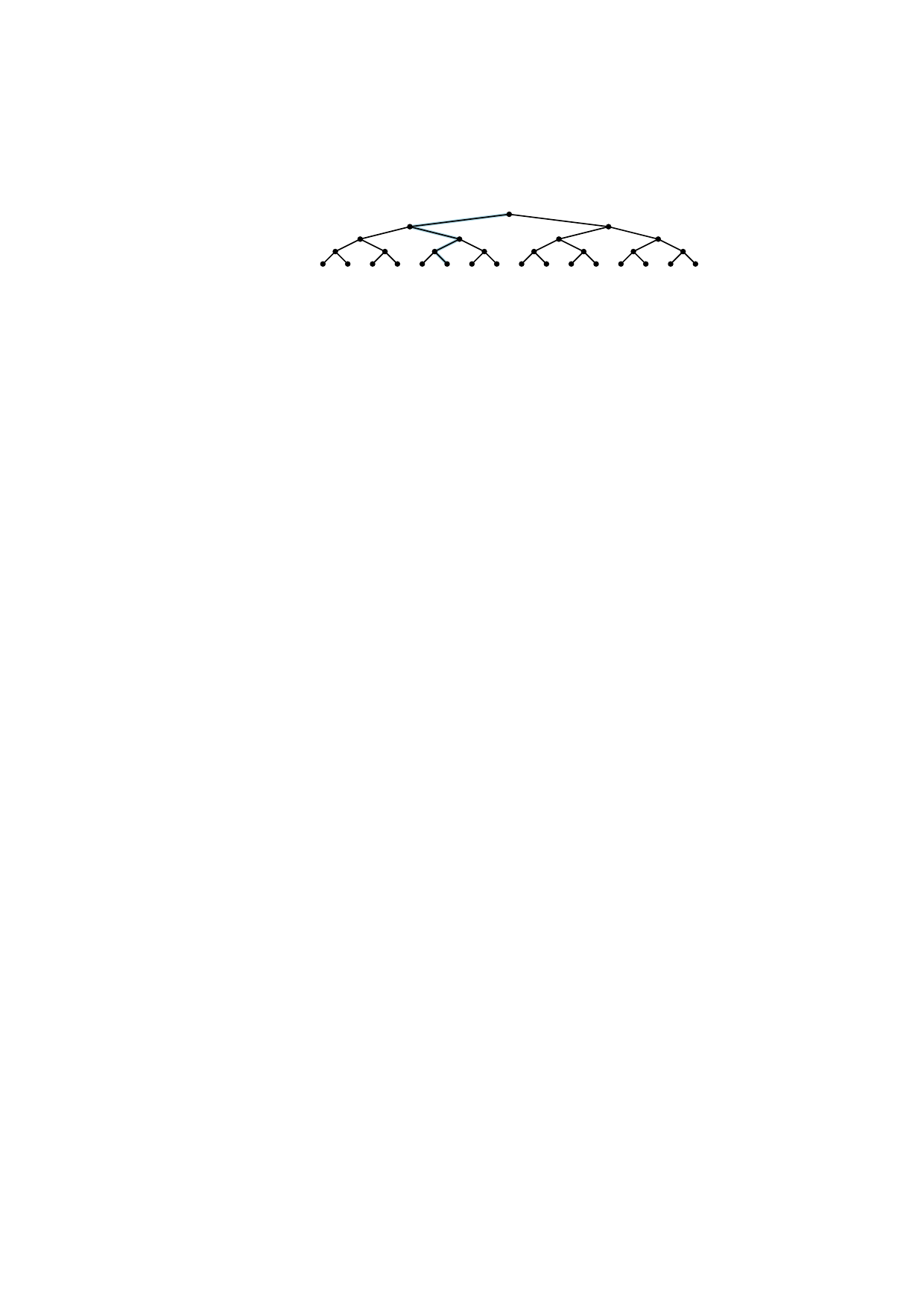}
		\subcaption{complete binary tree}
		\label{fig:completeBinary}
	\end{minipage}
    \hfill
	\begin{minipage}[b]{0.35\linewidth}
		\includegraphics[page=4]{many_bends}
		\subcaption{cactus}
		\label{fig:cactus}
	\end{minipage}
	\caption{The curve complexity of complete binary trees and cacti is unbounded.}
	\label{fig:lowerBound}
\end{figure}

\subsection{NP-Completeness}\label{sec:hardness}

We first show that minimizing the curve complexity is NP-hard. 

\begin{restatable}[\restateref{thm:treeGeneralCurveComplexityHard}]{theorem}{treeGeneralCurveComplexityHard}
	\label{thm:treeGeneralCurveComplexityHard}
	It is NP-complete to decide whether the curve complexity of a path-based tree support is at most $b$ even if 
	\begin{inparaenum}[(i)]
		\item $b=3$, every path has length at most~5,
        and the diameter of the support graph is at most~6; or
		\item the maximum degree of the support graph is at most~3.
	\end{inparaenum}
\end{restatable}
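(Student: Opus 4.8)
By the remark in \cref{sec:preliminaries}, a nondegenerate straight-line drawing of a path-based tree support is, as far as bend counts are concerned, fully described by an \emph{alignment}: a choice, at every vertex $v$, of a matching among the edges incident to $v$. In any such drawing a path $P$ bends exactly at those internal vertices of $P$ whose two $P$-edges do not form a pair of the alignment, and -- as noted in \cref{sec:preliminaries} -- every alignment of a tree support is planarly realizable. Hence a nondeterministic algorithm can guess an alignment, check in polynomial time that it is a matching at each vertex, compute for every path the number of internal vertices at which it is not aligned, and accept iff this number is at most $b$ for all paths; so the problem is in NP. For hardness, observe that (since at a vertex of degree $d$ an alignment has at most $\lfloor d/2\rfloor$ pairs) drawing the support is exactly the task of picking a matching at every vertex so that every path of length $\ell$ is aligned at $\ge \ell-1-b$ of its internal vertices. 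Two primitives drive the reduction: a degree-$4$ vertex has three perfect matchings, giving a binary truth-setting gadget (two of them read as ``true''/``false'', the third excluded by auxiliary paths), and a degree-$3$ vertex aligns at most one of its three edge pairs, a ``choose at most one'' gadget. We reduce from a suitably restricted NP-hard variant of 3-SAT (e.g.\ monotone/planar, or bounded-occurrence).

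\textbf{Part (i).} We set $b=3$ and make every path have length exactly~$5$, so a path attains at most $3$ bends iff it is aligned at one of its four internal vertices. For each variable $x$ we build a constant-size cluster of paths around a degree-$4$ \emph{variable vertex} whose joint ``$\le 3$ bends'' requirement forces its matching into one of exactly two states, encoding the value of $x$; the cluster also exposes, for each literal occurrence of $x$, a private vertex at which a clause path can be aligned precisely when that literal is satisfied. For each clause $\ell_1\vee\ell_2\vee\ell_3$ we add one length-$5$ \emph{clause path} that passes through the three private literal vertices of $\ell_1,\ell_2,\ell_3$ together with a fourth, degree-$\ge 3$, \emph{blocked} vertex whose matching is permanently occupied by a dedicated auxiliary path; the clause path is then aligned somewhere iff some literal holds, i.e.\ it has $\le 3$ bends iff the clause is satisfied. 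Attaching every cluster to a single hub keeps the support within radius~$3$, hence of diameter $\le 6$, and a little care in placing the private and blocked vertices keeps every path of length $\le 5$. Thus a satisfying assignment corresponds to a drawing of curve complexity $\le 3$, and conversely.

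\textbf{Part (ii).} Here neither $b$ nor the path length is constrained, so only $\Delta\le 3$ must be enforced. One option is to reduce directly from a bounded-occurrence 3-SAT using only the degree-$3$ ``choose at most one'' primitive -- a degree-$4$ variable vertex can be simulated by two adjacent degree-$3$ vertices joined by an edge, whose joint matchings realize the required binary choice. Alternatively one takes the construction from Part~(i) and replaces every vertex of degree $>3$ by a small caterpillar or binary ``splitter'' through which the incident edges and all passing paths are rerouted; each path acquires only a bounded number of extra bends inside a splitter, which is absorbed by enlarging $b$ accordingly (subdividing edges so that no splitter vertex offers an unintended free alignment). Either way the reduction stays polynomial.

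\textbf{Main obstacle.} The delicate part is designing the gadgets so that the intended matchings at the variable vertices and the blocked vertices are genuinely \emph{forced} rather than merely permitted -- this has to be achieved purely through the interaction of several length-$5$ paths and their bend bounds, since for trees no planarity or embedding constraint restricts alignments -- and so that the many clause paths sharing a variable create no spurious slack, all the while (for (i)) respecting the diameter-$6$ and length-$5$ budgets and (for (ii)) controlling the bends introduced by the splitters. Soundness then amounts to showing that from any alignment meeting the bound one can read off a satisfying assignment, which requires the gadgets to be tight.
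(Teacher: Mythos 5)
Your NP-membership argument and the reduction framework (alignments as per-vertex matchings, paths of length~5 needing at least one aligned internal vertex) match the paper's setup. But your clause gadget for Part~(i) cannot be realized in a tree: you want a single length-5 clause path that \emph{passes through} the three private literal vertices of $\ell_1,\ell_2,\ell_3$, one in each of three variable clusters hanging off a common hub. A simple path in a tree is the unique path between its endpoints and visits the hub at most once, so it can enter at most two of the clusters; it can never contain three internal vertices lying in three distinct subtrees of the hub. This is not a matter of tightening the gadget -- the disjunction of a clause cannot be encoded by threading one path through all three literals. The paper instead uses one path \emph{per literal occurrence} (from a leaf of a clause gadget, through the hub, to a leaf of the corresponding variable gadget) and realizes the ``OR'' inside the clause gadget: the gadget is a perfect binary tree of height~2, so at its degree-3 root and internal vertices exactly one of the four incident literal-paths is forced to pick up two bends before reaching the hub; four copies of each clause gadget guarantee that in at least one copy every path additionally bends at the hub, so the doubly-bent path has exhausted its budget of $b=3$ and must be aligned in its variable gadget, i.e.\ that literal is true. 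Your degree-4 truth-setting vertex is also unnecessary: the paper reads the truth value off the degree-3 variable-gadget center $x_i$ (the edge $xx_i$ aligns with $x_iv_i$ or with $x_i\neg v_i$).

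For Part~(ii) your second option (replace high-degree vertices by binary splitters and enlarge $b$) is the paper's route in spirit, but the step you defer to the ``main obstacle'' paragraph is exactly the missing content: at each degree-3 splitter vertex one pair of edges may be aligned for free, so without extra machinery different variable--clause paths accumulate different numbers of bends and the budget $b$ no longer discriminates satisfied from unsatisfied literals. The paper closes this with explicit \emph{alignment gadgets} (two large perfect binary trees joined by a length-4 path, with a path between every pair of leaves) attached at every internal splitter vertex, which force every variable--clause path to bend at every such vertex and hence equalize the forced bend count at $2(\lceil\log_2 n\rceil+\lceil\log_2 m\rceil)$. Since Part~(i)'s gadget design is structurally infeasible and Part~(ii)'s forcing mechanism is absent, the proposal does not constitute a correct proof.
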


\begin{proof}[Overview \& Case (i).]
	Containment in NP is clear.
    To show NP-hardness, we reduce from 3-SAT.
	Given a 3-SAT formula $\Phi$ with $n$ variables and $m$ clauses, we construct a tree support 
	with (i) diameter~6 or (ii) maximum degree~3, respectively, that admits a drawing with at most $b=3$ or $b=2\cdot(\lceil\log_2 n\rceil + \lceil\log_2 m\rceil + 1)$ bends per path if and only if $\Phi$ has a satisfying truth assignment. See \cref{fig:3bendsHard} for Case~(i)  and \Cref{app:complete} for Case~(ii).
	
	For each variable $v_i$, $i=1,\dots,n$, the variable gadget is a star with central vertex~$x_i$ and two leaves labeled $v_i$ and $\neg v_i$.
	A clause gadget of a clause $c_j$, $j=1,\dots,m$, is a perfect binary tree of height two rooted at a vertex labeled~$c_j$ whose leaves are labeled with the literals of~$c_j$.
	One literal appears twice, once in each
	subtree rooted at the children of~$c_j$.
    Labels referring to a variable
    appear multiple times, once in a variable gadget and at most twice in each clause~gadget.

    \begin{figure}[t]
	\centering
	\includegraphics[page=1]{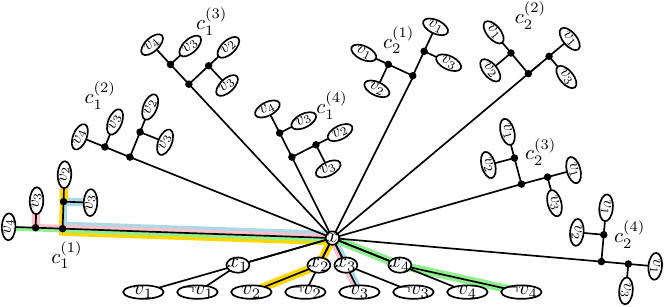}
	\caption{Graph with diameter~6 corresponding to clauses $c_1 = v_2 \lor v_3 \lor \neg v_4$
		and $c_2 = v_1 \lor \neg v_2 \lor v_3$,
		drawn corresponding to
        $v_1=v_3=v_4=\top$,  $v_2=\bot$.
		Among the $2\cdot 4 \cdot 4 = 32$ paths only the four to one copy of a clause gadget for $c_1$ are~drawn. }
	\label{fig:3bendsHard}
\end{figure}

	\bigskip
	\noindent\textit{Case (i).}
	For each clause~$c_j$, $j=1,\dots,m$, there are four copies of the clause gadget.
	A central vertex $x$ connects to all $4m$ central vertices of the clause gadgets and to all $n$ central vertices of the variable gadgets.
	The set of paths contains a path from any leaf labeled~$v_i$ or $\neg v_i$ of a clause gadget of a clause~$c_j$
	to the leaf labeled~$v_i$ or~$\neg v_i$ in the variable gadget of~$v_i$, depending on whether~$v_i$ or~$\neg v_i$ appears in~$c_j$.
	Clearly, the diameter of this graph is~6.
	
	In an optimal drawing,
    we may assume that, at every vertex of degree~3, exactly two edges are aligned.
    Hence, for each clause gadget $c$, 
    exactly one sub-path from $x$ to one leaf of~$c$ has two bends,
	while the remaining sub-paths ending in the other three leaves of~$c$ have zero or one bend each.
	There is at least one copy of each clause gadget such that each path ending there bends also in~$x$.
	Now consider the variable gadgets.
	We may assume that, for each $i=1,\dots,n$,
	the edge~$xx_i$ is aligned with either~$x_iv_i$ or~$x_i\neg v_i$ since $x_i$ is a degree-3 vertex.
	Interpreting the former as~$v_i$ being true,
	there is a satisfying truth assignment if and only if
	there is a drawing of curve complexity~3: 
	
	Assume first that there is a drawing with at most three bends per path.
	For a clause $c_j$, consider the clause gadget $H$ of $c_j$ such that each path ending in $H$ bends in $x$.
	Consider the leaf $u$ of $H$ such that the path $P$ ending in $u$ bends twice between $x$ and $u$.
	Then $P$ has already three bends and cannot have a further bend in the variable gadget.
	Thus, the respective literal is true.
	Assume now that $\Phi$ has a satisfying truth assignment.
	We align the variable gadgets according to the truth assignment.
	For a clause $c_j$ let $v_i$ be a variable that makes $c_j$ true.
	Draw the four clause gadgets associated with $c_j$
	such that the path from $x$ with two bends ends at a leaf labeled $v_i$.
	Then each path has at most 3~bends.

     \begin{figure}[t]
         \centering
         \includegraphics[page=17,scale=0.9]{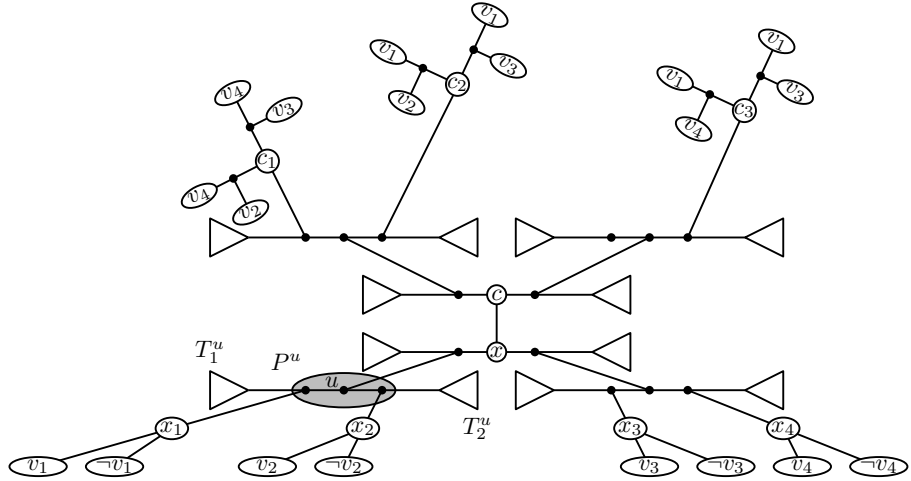}
         \caption{NP-hardness construction with maximum degree~3. 
             Triangles indicate perfect binary trees of height $\lceil \log_2 n \rceil + \lceil \log_2 m \rceil  + 1$ for $n$ variables and $m$ clauses; here $n=4$ and $m=3$. These perfect binary trees come in pairs $T^u_1$, $T^u_2$ with paths between all leaves in $T^u_1$ and all leaves in~$T^u_2$. 
         }
         \label{fig:treeGeneralCurveComplexityHardBoundedDeg}
     \end{figure}
     
     \bigskip
     \noindent\textit{Case (ii).}
     We consider this case in the appendix.
     The main idea is to realize the connections at the high-degree vertex~$x$ by binary trees.
     See \cref{fig:treeGeneralCurveComplexityHardBoundedDeg} for an illustration.
	 \qed
\end{proof}

Similarly, lexicographically maximizing the bend vector is also NP-hard.

\begin{restatable}[\restateref{thm:nphard-max0bendthen1bend}]{theorem}{nphardMaxZeroOneBends}
	\label{thm:nphard-max0bendthen1bend}
	It is NP-complete to decide if a path-based tree support admits a drawing in which $n_0$ paths have zero bends each and $n_1$ paths have one bend each even if the diameter of the support graph is at most four.
\end{restatable}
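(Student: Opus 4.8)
My plan is to reduce from 3-SAT, with a construction structurally similar to the one for \cref{thm:treeGeneralCurveComplexityHard} but re-engineered so that the hardness survives the diameter-4 restriction. Membership in NP is immediate: a solution is an alignment requirement (a matching on the incident edges at every vertex), which for a tree is always realizable by a planar straight-line drawing (\cref{sec:preliminaries}); the number of bends of a path then equals the number of its interior vertices at which its two edges are not a pair, so $n_0$ and $n_1$ can be read off and checked in polynomial time. For the hardness, from a formula $\Phi$ with variables $v_1,\dots,v_n$ and clauses $c_1,\dots,c_m$ I would build a tree with one central vertex $x$ all of whose other vertices lie within distance two of $x$, so the diameter is four. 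Each variable $v_i$ gets a vertex $X_i$ adjacent to $x$ with leaves $T_i,F_i$ and two \emph{consistency paths} $\langle T_i,X_i,x\rangle$ and $\langle F_i,X_i,x\rangle$. Each clause $c_j$ gets a vertex $C_j$ adjacent to $x$ with one leaf $g_{j,t}$ per literal occurrence, an extra leaf $D_j$ adjacent to $x$, and a \emph{blocker path} $\langle C_j,x,D_j\rangle$; and for every literal occurrence there is a \emph{literal path} from $g_{j,t}$ through $C_j,x,X_i$ to whichever of $T_i,F_i$ matches the sign of the literal. This is a path-based tree support of diameter four with $O(n+m)$ vertices and paths, and the target values are $n_0=n+m$ and $n_1=n+m$.

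I would then record when each path is straight or has exactly one bend; note that every path has at most three interior vertices. A consistency path of $v_i$ is straight iff $X_i$ pairs $xX_i$ with $X_iT_i$ or with $X_iF_i$ --- the two \emph{binary states} of $X_i$ --- and if $X_i$ is in neither state both consistency paths of $v_i$ bend once. The blocker $\langle C_j,x,D_j\rangle$ is straight iff $x$ pairs $xC_j$ with $xD_j$. A literal path for an occurrence of a literal $\ell$ over $v_i$ has interior vertices $C_j,x,X_i$; it is straight iff aligned at all three and has exactly one bend iff aligned at exactly two. The two crucial points are: such a path is aligned at $X_i$ iff $X_i$ is in the binary state that makes $\ell$ true; and it is aligned at $x$ only if $xC_j$ is paired at $x$, which prevents the blocker of $c_j$ from being straight and prevents a second literal path of $c_j$ from being aligned at $x$, while at most one literal occurrence of $c_j$ can be aligned at $C_j$.

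From this the equivalence follows in three steps. (1) For each clause, the blocker and its three literal paths together contain at most one straight path, so $n_0\le n+m$, with equality only if every $X_i$ is in a binary state and every clause contributes exactly one straight path. (2) In any drawing with $n_0=n+m$: the straight path of a clause is either its blocker or one literal path (the latter forcing that literal to be true); beyond that straight path the clause contains at most one $1$-bend path, and such a path exists iff the clause has a true literal under the assignment induced by the binary states; and each binary variable contributes exactly one $1$-bend consistency path. Hence in such a drawing $n_1 = n + s$, where $s$ is the number of satisfied clauses, so $n_1\le n+m$ with equality iff the induced assignment satisfies $\Phi$. (3) Conversely, if $\Phi$ has a satisfying assignment, putting each $X_i$ in the corresponding state, keeping every blocker straight, and aligning in each clause one true literal path at $C_j$ yields $n_0=n+m$ and $n_1=n+m$; and any drawing achieving $(n+m,n+m)$ must have $n_0=n+m$, so by (2) its induced assignment satisfies $\Phi$.

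The main obstacle --- and the reason for the $D_j$ blockers --- is that, unlike curve complexity, simply maximizing $n_0$ over drawings of a diameter-4 tree support is polynomial: it decomposes into an independent per-vertex maximum-weight matching computation, much as for the total number of bends (\cref{thm:treeGeneral-total}). So the NP-hardness cannot come from the $n_0$-coordinate by itself; the construction must make satisfiability influence the attainable $n_1$ while leaving the attainable $n_0$ untouched, which is exactly what forcing all blockers straight does --- it caps every literal path at one bend and decouples it from $n_0$. The delicate part of the write-up is step (2): verifying that in every $n_0$-optimal drawing an unsatisfied clause truly cannot produce even one $1$-bend path among its own paths, in either the blocker-straight or the one-literal-straight configuration, so that no trade of straight paths for $1$-bend paths elsewhere can spoof the target vector $(n+m,n+m)$.
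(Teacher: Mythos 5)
Your reduction is correct, but it is a genuinely different reduction from the one in the paper, even though the high-level architecture (a diameter-4 tree with a central vertex $x$, star gadgets, and ``guard'' paths whose straightness is needed to reach the $n_0$ target) is the same. The paper reduces from \emph{Not-All-Equal}-3-SAT: it builds \emph{two} clause gadgets per clause (a positive and a negated copy), attaches tail paths $\langle x_i,x,t(x_i)\rangle$, $\langle c_j,x,t(c_j)\rangle$, $\langle \neg c_j, x, t(\neg c_j)\rangle$ through $x$ for \emph{every} gadget, and sets $n_0=n+2m$, $n_1=2m$; the tails force every literal path to bend at $x$ (a straight literal path would cost two tail paths but gain only one, so it never occurs), after which exactly one literal path per gadget can achieve one bend, and the positive/negative gadget pair enforces the not-all-equal condition. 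You instead reduce from plain 3-SAT with a single gadget per clause, replace the variable tails by the two consistency paths $\langle T_i,X_i,x\rangle,\langle F_i,X_i,x\rangle$, and keep blockers only on the clause side; the price is that you must handle the extra configuration in which a literal path is itself straight (your Case~B), which the paper's variable tails rule out by the $2k$-versus-$k$ counting argument. Your version buys a more standard source problem and a slightly smaller gadget; the paper's version buys a cleaner case analysis (all $\p_1$ paths provably bend at $x$). Both yield the claimed diameter-4 bound.

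Two small points to fix in a write-up. First, in Case~B your claim that the clause contributes exactly one $1$-bend path beyond its straight path uses that at most one literal path of $c_j$ is aligned at $x$; this requires the three literals of a clause to lie over three distinct variables (otherwise two literal paths share the edge $xX_i$ and both can be aligned at $x$), so state the standard assumption that clauses contain three distinct variables. Second, your motivating aside that maximizing $n_0$ alone on diameter-4 tree supports ``decomposes into independent per-vertex matchings'' is unsubstantiated: a path is straight only if it is aligned at \emph{all} of its up to three interior vertices, so the objective does not decompose vertex-wise the way the total bend count does (indeed, maximizing the number of $0$-bend paths is listed as an open problem in \cref{sec:op}). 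This remark is not load-bearing for your reduction, but as stated it should be removed or weakened.
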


\subsection{%
Tree Supports~--~Exact Algorithms}\label{sec:trees}

The total number of bends can be minimized in polynomial time in trees.

\begin{restatable}[\restateref{thm:treeGeneral-total}]{theorem}{treeGeneralTotal}
    \label{thm:treeGeneral-total}
    Given a path-based tree support $(G=(V,E),\p)$,  
    an alignment of $E$ that yields a realization where the sum of bends in all paths is minimum, 
    can be computed in 
    $\oh(|V| \cdot |\p| + |V|^{2.5}\log^2(\max\{|V|,|\p|\}))$ time.
\end{restatable}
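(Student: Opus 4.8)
The plan is to turn the problem into $|V|$ independent maximum‑weight matching instances, one per vertex. Recall from \cref{sec:preliminaries} that for a tree support every alignment $A$ of $E$ can be realized by a planar straight‑line drawing, and that in such a drawing a path $P\in\p$ bends at an internal vertex $v$ of $P$ whenever the two edges of $P$ incident to $v$ are not aligned. For incident edges $e,e'$ of $G$ with common endpoint $v$, let $w(\{e,e'\})$ be the number of paths of $\p$ in which $e$ and $e'$ appear consecutively. If $D$ is any straight‑line drawing of $(G,\p)$ and $A_D$ is the set of pairs of incident edges that are aligned in $D$, then $A_D$ is an alignment (at each vertex, being aligned is a matching on the incident edges), and the total number of bends in $D$ equals $\|\p\|-|\p|-\sum_{\{e,e'\}\in A_D}w(\{e,e'\})$, because $\sum_{P\in\p}(|P|-1)=\|\p\|-|\p|$ and aligning one pair with common vertex $v$ removes exactly one bend from every path using that pair at $v$. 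Conversely, realizing a given alignment $A$ produces a drawing $D$ with $A\subseteq A_D$, hence with at most $\|\p\|-|\p|-\sum_{\{e,e'\}\in A}w(\{e,e'\})$ bends. As the weights are nonnegative, the minimum total number of bends equals $\|\p\|-|\p|-\max_A\sum_{\{e,e'\}\in A}w(\{e,e'\})$, and an alignment attaining that maximum yields a minimum realization; so it remains to compute an alignment $A$ maximizing $\sum_{\{e,e'\}\in A}w(\{e,e'\})$.

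Next I would observe that this maximization decomposes over the vertices of $G$. Every pair of incident edges has a unique common endpoint, so the pairs of an alignment $A$ are partitioned by that endpoint; moreover, by definition, the set $A_v$ of pairs of $A$ with common endpoint $v$ is precisely a matching in the graph $G_v$ whose nodes are the edges of $G$ incident to $v$ and whose weighted edges are the pairs $\{e,e'\}$ with $w(\{e,e'\})>0$. Conversely, picking an arbitrary matching $M_v$ in each $G_v$ and setting $A:=\bigcup_{v\in V}M_v$ gives a valid alignment. Hence $\max_A\sum_{\{e,e'\}\in A}w(\{e,e'\})=\sum_{v\in V}\bigl(\text{maximum weight of a matching in }G_v\bigr)$, and an optimal alignment is obtained by computing a maximum‑weight matching in each $G_v$ independently and taking the union.

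For the running time: traversing all paths and, for each internal vertex $v$ of each path, recording the pair of consecutive edges at $v$ (via a local index of the edges around $v$), then bucketing these records by $v$ and counting multiplicities, computes all positive weights — and thus all graphs $G_v$ — in $\oh(\|\p\|+|V|)\subseteq\oh(|V|\cdot|\p|)$ time. Let $d_v$ be the degree of $v$ in $G$. Then $G_v$ has $d_v$ nodes, $\oh(d_v^2)$ edges, and maximum weight at most $|\p|$, so a maximum‑weight matching in $G_v$ can be computed with a scaling algorithm for weighted matching in $\oh\bigl(\sqrt{d_v}\cdot d_v^2\cdot\log d_v\cdot\log(d_v|\p|)\bigr)=\oh\bigl(d_v^{2.5}\log^2(\max\{|V|,|\p|\})\bigr)$ time. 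Summing over all vertices and using $\sum_v d_v=2|E|<2|V|$ together with $\sum_v d_v^{2.5}\le(\max_v d_v)^{3/2}\sum_v d_v=\oh(|V|^{2.5})$ gives the claimed bound.

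In my view the only genuinely delicate point is the per‑vertex decomposition: one has to verify that the alignment constraint is exactly a matching constraint at each vertex and that the matchings at distinct vertices may be chosen independently — which works because a pair of incident edges determines its common vertex, so the per‑vertex matchings live on disjoint sets of pairs. The reduction to matching and the time analysis are then routine, the latter resting only on the convexity bound $\sum_v d_v^{2.5}=\oh(|V|^{2.5})$ and an off‑the‑shelf $\oh(\sqrt{n}\,m\cdot\mathrm{polylog})$ maximum‑weight matching algorithm.
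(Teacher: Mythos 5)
Your proposal is correct and follows essentially the same route as the paper: reduce bend minimization to an independent maximum-weight matching problem at each vertex, where the weight of a pair of incident edges is the number of paths traversing both, and then sum the per-vertex matching costs using $\sum_v d_v^{2.5}=\oh(|V|^{2.5})$. The only (harmless) differences are that you restrict the matching graph at $v$ to positive-weight pairs and build it by bucketing path records in $\oh(\|\p\|+|V|)$ time, whereas the paper works with the complete graph $K(v)$ on $N(v)$ and an $\oh(|V|^2)$-size lookup table; both fit the claimed bound, and your explicit accounting identity for the total number of bends makes the optimality argument slightly more self-contained.
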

\begin{proof}
    For each vertex $v$, let  $K(v)$ be the complete graph on neighborhood $N(v)$ of~$v$.
    The weight of an edge $uw$ of $K(v)$ is the number of paths in $\p$ containing
    $uv$ and~$vw$.
    Let $M(v)$ be a maximum-weight matching on $K(v)$ and align $uv$ and $vw$ if and only if $uw \in M(v)$.
    We compute a maximum-weight matching~\cite{gabow:matching91} for each vertex.
    It is easy to see that such matchings yield an optimal solution for aligning edges at vertices.
    See \Cref{app:trees} for the running-time analysis. \qed
\end{proof}

The curve-complexity of a path-based tree support can be determined in polynomial time if it is very small.

\begin{theorem}\label{thm:treesGeneral-0-1}
    Given a path-based tree support $(G=(V,E),\p)$,
    it can be decided in 
    $\oh((\|\p\| + |V|^2) \cdot |V|)$ time
    whether its planar curve complexity is 0 or 1. 
\end{theorem}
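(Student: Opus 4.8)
The plan is to turn the question into a \textbf{2-SAT} instance, using the remark in \cref{sec:preliminaries} that for a path-based tree support every alignment can be realized by a planar straight-line drawing. Consequently, the planar curve complexity is the smallest $b$ for which there is an alignment $\mathcal A$ such that, for every path $P\in\p$, at most $b$ internal vertices $v$ of $P$ are \emph{not} aligned for $P$ (a path bends at $v$ exactly when its two edges at $v$ are not aligned). Formally: for the forward direction take the alignment consisting of all pairs that happen to be aligned in a given drawing (at any vertex, each incident edge is collinear-opposite to at most one other incident edge, so this is indeed an alignment); for the backward direction realize $\mathcal A$ by a generic planar straight-line drawing, so that no pair outside $\mathcal A$ becomes aligned and the bend count of each path is exactly its number of non-aligned internal vertices.

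Concretely, first scan all paths in $\oh(\|\p\|)$ time to compute, for each vertex $v$, the set $U_v$ of unordered pairs $\{e,e'\}$ of edges incident to $v$ that are used consecutively by some path through $v$; let $p(P,v)\in U_v$ denote the pair of $P$ at an internal vertex $v$ of $P$. The curve complexity is $0$ iff for every $v$ the pairs in $U_v$ are pairwise edge-disjoint (form a matching), which is tested in $\oh(\|\p\|)$ time. Otherwise, build a 2-SAT formula with one variable $a_{v,p}$ for each $v$ and each $p\in U_v$ (``pair $p$ is aligned at $v$''), the \emph{matching clauses} $\neg a_{v,p_1}\lor\neg a_{v,p_2}$ for all distinct $p_1,p_2\in U_v$ sharing an edge, and the \emph{bend-count clauses} $a_{v,p(P,v)}\lor a_{v',p(P,v')}$ for every $P\in\p$ and every two distinct internal vertices $v,v'$ of $P$ (paths of length at most $2$ contribute no such clause). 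A satisfying assignment gives, at each $v$, the matching $A_v=\{p\in U_v: a_{v,p}=1\}$; the induced alignment, realized generically and planarly, makes every path bend only at the at most one internal vertex where its pair is unselected, so the curve complexity is at most $1$. Conversely, any drawing of curve complexity at most $1$ yields a satisfying assignment by setting $a_{v,p}=1$ exactly for the aligned pairs. Hence the formula is satisfiable iff the planar curve complexity is at most $1$; combined with the $0$-test above, this decides which of $0$, $1$, or ``$\ge 2$'' holds. The number of variables is $\sum_v|U_v|\le\|\p\|$, the number of matching clauses is $\oh\bigl(\sum_v \deg(v)\cdot|U_v|\bigr)=\oh(|V|\cdot\|\p\|)$, and the number of bend-count clauses is $\sum_{P\in\p}\binom{|P|-1}{2}=\oh(|V|\cdot\|\p\|)$ since $|P|\le|V|$; building the formula (grouping used pairs per vertex by radix sort on endpoint pairs) and running a linear-time 2-SAT solver then stays within $\oh\bigl((\|\p\|+|V|^2)\cdot|V|\bigr)$.

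The main obstacle is the correctness equivalence rather than the running time: one must be careful in the backward direction to argue that the chosen alignment can be realized so that \emph{no} additional used pair becomes aligned at any vertex (so that the bend count of each path is exactly what the assignment predicts), and one must set up the encoding so that paths sharing the same pair at a vertex share the single variable $a_{v,p}$ — this is what keeps the matching constraints from blowing up quadratically at vertices through which many paths pass, and it is what makes the formula size, and hence the overall time bound, hold.
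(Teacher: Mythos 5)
Your proposal is correct and follows essentially the same route as the paper: a 2-SAT formulation over alignment variables (one per pair of adjacent edges used by some path), with ``matching'' clauses forbidding two overlapping pairs from both being aligned and, for the 1-bend case, clauses requiring that for any two internal vertices of a path at least one of the corresponding pairs is aligned, all justified by the fact that every alignment of a tree support is planarly realizable. The only differences are cosmetic (flipped variable polarity, restriction to pairs actually used by paths, and a direct matching test for the 0-bend case instead of unit clauses), and your size/time accounting matches the claimed bound.
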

\begin{proof}
    We use a 2-SAT formulation with a variable $x_{e_1e_2}$ for each pair $e_1,e_2$ of adjacent edges.
    We interpret $x_{e_1e_2}$ as true if and only if $e_1$ and $e_2$ are not aligned.
    For any three distinct edges $e_1,e_2,e_3$ sharing a common end vertex,
    we need the consistency condition that whenever $e_1$ is aligned to $e_2$ then $e_1$ cannot be aligned to $e_3$,
    i.e., the clause  $x_{e_1e_2} \vee x_{e_1e_3}$.
    By the same argument, we also have the clauses
    $x_{e_1e_2} \vee x_{e_2e_3}$ and $x_{e_1e_3} \vee x_{e_2e_3}$.
    If we insist on having no bends, then we need for any two adjacent edges $e_1,e_2$ on the same path the condition~$\neg x_{e_1e_2}$.
    If we aim for at most one bend per path, we add for any four edges $e_1,e_2,e_3,e_4$ on a path where $e_1$, $e_2$ are adjacent and $e_3$, $e_4$ are adjacent, the condition that at least one among the two pairs must be aligned, i.e., the clause  $\neg x_{e_1e_2} \vee \neg x_{e_3e_4}$.
    This yields a 2-SAT formulation with $\sum_{v \in V} \deg(v)(\deg(v)-1) \in \oh(|V|^2)$ variables and $\oh(\sum_{P \in \p}|P|^2+\sum_{v \in V} \deg^3(v)) \subseteq \oh(\|\p\|\cdot |V| + |V|^3)$ clauses.
    The number of constraints determines the running time
    since constructing the instance can be done in the same time
    and solving 2-SAT can be done in linear time. \qed
\end{proof}

The curve complexity is at most two if the support graph is a caterpillar:  draw the spine on a straight line as in
\cref{fig:catmax0:2}.
So, the curve complexity of a path-based caterpillar support can be determined in polynomial~time. 

\subsection{%
Tree Supports~--~Parameterized Algorithms}\label{sec:parameterizedTree}
We now consider the parameterized complexity of the problems.

\begin{theorem}\label{thm:treesGeneralFPT-paths}
    Determining the (planar) curve complexity for a path-based tree support $(G,\p)$ is in FPT parameterized by $|\p|$.
\end{theorem}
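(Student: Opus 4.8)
The plan is to recast the question combinatorially. As recalled in \cref{sec:preliminaries}, every alignment requirement of a path-based tree support is realizable in the plane, so the (planar) curve complexity of $(G,\p)$ equals the minimum, over all alignments $\mathcal A$ of the edge set $E$, of $\max_{P\in\p}\beta_{\mathcal A}(P)$, where $\beta_{\mathcal A}(P)$ counts the internal vertices $v$ of $P$ at which the two edges of $P$ incident to $v$ do not form a pair of $\mathcal A$. Thus it suffices to search over alignments, and the crux is to argue that only $\oh(|\p|)$ vertices carry genuine choices, so that the number of essentially different alignments is bounded by a function of $k:=|\p|$ alone.

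First I would bound the ``interesting'' part of $G$ by $k$. Since every edge lies on some path and no path can pass through a degree-$1$ vertex, each leaf of $G$ is an endpoint of some path of $\p$; hence $G$ has at most $2k$ leaves, and therefore, since $\sum_{v}(\deg v-2)=-2$ in a tree, at most $2k-2$ vertices of degree $\ge 3$, with $\sum_{\deg v\ge 3}\deg v\in\oh(k)$.

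Next I would observe that at a degree-$2$ vertex it never hurts to align the two incident edges, and more generally that any maximal subpath of $G$ all of whose internal vertices have degree $2$ and are not endpoints of a path in $\p$ can be aligned along its whole length: this creates no bend, and the corresponding alignment choices involve only that subpath's own edges, hence constrain nothing else. Contracting every such subpath to a single edge yields a tree $T'$ with induced path set $\p'$ (each path of $\p$ either misses the interior of such a subpath, or, being forced through its degree-$2$ vertices, traverses it entirely, so its image is again a simple path, and $T'$ is again a tree). The vertices of $T'$ are the leaves, the degree-$\ge 3$ vertices, and the remaining path endpoints of $G$, so $|V(T')|\le 6k$ and $\sum_{v\in V(T')}\deg_{T'}v\in\oh(k)$. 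Every alignment of $T'$ lifts to one of $G$ with the same bend count on each path, and the restriction to $T'$ of any alignment of $G$ can only decrease bend counts; hence the curve complexity of $(G,\p)$ equals that of $(T',\p')$.

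Finally I would brute-force over $(T',\p')$: an alignment of $T'$ is a choice, independently at each vertex $v$, of a partial matching on its $\deg_{T'}v$ incident edges, and the number of partial matchings on a $d$-element set is at most $d!$, so the number of alignments of $T'$ is at most $\prod_{v}(\deg_{T'}v)!\le\bigl(\sum_v\deg_{T'}v\bigr)^{\sum_v\deg_{T'}v}=2^{\oh(k\log k)}$. For each alignment I evaluate $\max_{P\in\p'}\beta(P)$ by walking along every path (each of length $<|V(T')|\in\oh(k)$), i.e.\ in $\oh(k^2)$ time, and return the minimum; together with the polynomial-time contraction this runs in $2^{\oh(k\log k)}\cdot\mathrm{poly}(|G|+\|\p\|)$ time, establishing membership in FPT parameterized by $|\p|$. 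The only delicate points are the two structural facts -- that $G$ has $\oh(|\p|)$ leaves (hence $\oh(|\p|)$ branch vertices, hence $|V(T')|\in\oh(|\p|)$) and that contracting degree-$2$ stretches preserves both simplicity of the paths and all bend counts -- but both follow directly from the tree structure; everything else is routine enumeration.
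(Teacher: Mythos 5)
Your proposal is correct and follows essentially the same route as the paper: bound the number of leaves by $2|\p|$, suppress the degree-2 vertices to obtain a kernel tree of size $\oh(|\p|)$, and brute-force over all edge alignments of the kernel. The only (cosmetic) difference is that you keep degree-2 path endpoints in the kernel instead of extending their paths to a neighbor as the paper does, and you spell out the $2^{\oh(k\log k)}$ bound on the number of alignments that the paper leaves implicit.
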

\begin{proof}
    For each leaf $v$ of~$G$, there is a path in $\p$ that ends in $v$, thus, the number of leaves is at most~$2|\p|$. Iteratively replace each vertex $v$ of degree~2 and its incident edges by a single edge connecting the neighbors of $v$, extending paths ending at $v$ to one of the neighbors. 
    Now the number of inner vertices is less than the number of leaves, i.e.\ the size of the resulting tree is less than $4|\p|$.
    This kernel can be solved brute-force testing all edge alignments. \qed
\end{proof}

\begin{theorem}
\label{thm:treesGeneralFPT-VC}
    Determining the (planar) curve complexity for a path-based tree support $(G,\p)$ is in FPT parameterized by the vertex cover number $k$.
\end{theorem}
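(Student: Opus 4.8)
The plan is to exploit the standard structure of graphs with small vertex cover: there is a set $C$ of $k$ vertices whose removal leaves an independent set, so every vertex not in $C$ is a leaf or attaches only to vertices of $C$. In a tree, each vertex $v \notin C$ in fact attaches to at most one vertex of $C$ unless it lies on a path between two $C$-vertices; I would first argue that after the degree-2 suppression used in \cref{thm:treesGeneralFPT-paths} (contracting degree-2 vertices and extending the paths through them), every remaining non-$C$ vertex is a leaf adjacent to exactly one vertex of $C$. Thus the reduced tree consists of the at most $k$ cover vertices, the tree edges among them (at most $k-1$ of them, since it is still a forest on $C$), and a collection of pendant leaves hanging off the $C$-vertices. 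The key combinatorial observation is that the pendant leaves at a single cover vertex $c$ are interchangeable from the point of view of alignments: what matters at $c$ is only the \emph{multiset} of path-types passing through, and two pendant edges at $c$ that carry the same set of (suffixes of) paths can be permuted freely.

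Next I would bound the number of relevant leaf-types per cover vertex. A path in $\p$ that terminates at a pendant leaf $\ell$ of $c$ enters $c$ along the edge $c\ell$ and, if it continues, leaves $c$ along one of the at most $k-1$ tree-edges incident to $c$ (to another cover vertex) or along another pendant edge; a path merely passing through $c$ uses two edges at $c$. So the behaviour of any path at $c$ is described by an unordered pair of edge-slots, and the only parameter that governs whether curve complexity $b$ is achievable is, for each cover vertex, how many pendant edges we assign to each ``role'' — equivalently, a bounded-size profile. Because the cover vertices see only $O(k)$ distinct non-pendant edges among them, the number of distinct profiles at a vertex is a function of $k$ and $b$ alone, and I would set up a dynamic program (or a direct bounded search) over assignments of these profiles, which is then FPT: guess the alignment at each of the $\le k-1$ core edges of the cover tree, and for each cover vertex greedily/combinatorially check whether its pendant leaves can be aligned so that no path exceeds $b$ bends, which for fixed $b$ reduces to a bounded matching/flow instance of size $f(k,b)$. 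Since the curve complexity never exceeds, say, $\|\p\|$, and in fact (after suppression) is bounded in terms of $k$, we can try all relevant values of $b$.

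Alternatively — and this is probably the cleanest write-up — I would bound the \emph{size} of the suppressed tree itself by a function of $k$: a tree with vertex cover number $k$ has at most $k$ internal (cover) vertices of degree $\ge 2$ after suppression, each with some pendant leaves, and the number of pendant leaves that are ``distinguishable'' (carry distinct sets of path-endpoints among the $\le |\p|$ paths, but only $O(k)$ of whose other endpoints are themselves cover vertices) is bounded; grouping indistinguishable leaves and keeping one representative with a multiplicity gives an annotated kernel of size $g(k)$, on which we brute-force all edge alignments exactly as in \cref{thm:treesGeneralFPT-paths}. The main obstacle I anticipate is making the ``indistinguishable leaves are interchangeable'' claim fully rigorous: one must check that permuting equivalent pendant edges at a cover vertex does not change the bend count of any path — this is immediate for paths local to that vertex, but paths that pass through several cover vertices must be tracked, so the equivalence classes have to be defined with respect to the global path structure, not just the local star. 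Handling this bookkeeping carefully — ideally by phrasing the whole reduced instance as a constant-size (in $k$) ``instance with edge multiplicities'' and arguing any alignment can be symmetrized — is the crux; everything after that is the same brute-force argument already used for the parameter $|\p|$.
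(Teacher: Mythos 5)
Your high-level strategy --- strip the tree down to an $O(k)$-size core determined by the vertex cover, classify the pendant leaves by the behaviour of the paths ending there, keep boundedly many representatives per class, and brute-force the resulting kernel --- is exactly the route the paper takes (it removes all leaves to obtain a core $G'$ with at most $2k-1$ vertices, hence at most $\binom{2k-1}{2}$ distinct core subpaths). However, the step you yourself flag as ``the crux'' is precisely where the real work lies, and your proposal does not supply it. Keeping ``one representative with a multiplicity'' per leaf-type is not enough: you must determine \emph{how many} representatives to retain so that the reduced instance has provably the same curve complexity, and the right number depends on how the paths sharing a common core subpath distribute over leaves at \emph{both} of their ends. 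For instance, if three paths share the same core subpath and extend to pairwise distinct leaves at each end, one of them is forced to acquire two extra bends outside the core, whereas with only two such paths this can be avoided; if several of them instead share a leaf at one end, the count changes again. The paper needs three explicit lower-bound observations (a path gains at most two bends outside $G'$; two paths entering the same core edge from distinct leaves at a vertex force a bend there; three extensions to pairwise distinct leaves on both sides force a doubly-bent path) plus a six-way case analysis to show that retaining at most six paths per core subpath preserves the curve complexity in both directions. Your ``bounded matching/flow instance of size $f(k,b)$'' per cover vertex has the same unresolved dependency: the matching lives on the neighbourhood of a cover vertex, which contains unboundedly many pendant edges until exactly this capping argument is in place.

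Two secondary points. First, after suppressing degree-2 vertices it is \emph{not} true that every remaining non-cover vertex is a leaf adjacent to exactly one cover vertex: a non-cover vertex may have three or more neighbours all lying in the cover (e.g., a degree-3 centre whose three neighbours form the minimum cover). Such branch vertices are bounded in number in terms of $k$, but your structural description of the kernel must account for them. Second, your first variant is an algorithm parameterized by $k+b$; to conclude FPT in $k$ alone you rely on the curve complexity being bounded in terms of $k$, which is true (a path bends at most once per internal vertex of its core subpath and at most twice at its ends, so $b \le 2k$) but should be stated and justified rather than asserted.
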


\begin{proof}
    Let~$S$ be a minimum vertex cover of~$G$.
    If~$G$ has at most two vertices, then the curve complexity is~0. Otherwise, w.l.o.g.~$S$ contains no leaves of $G$.
    Let~$G'$ be the graph obtained from~$G$ by removing all its leaves.
    All leaves of $G'$ must be in $S$ and $S$ is a vertex cover of $G'$. In \Cref{app:trees}, we show inductively:
    
    \begin{restatable}[\restateref{cl:kernelSize}]{myclaim}{kernelSize}
    \label{cl:kernelSize}
    The number~$n'$ of vertices of $G'$ is at most 
    $2k-1$.
    \end{restatable}

    Thus, the number of distinct paths in the tree $G'$ is at most $\binom{2k-1}{2}$~-- one for each pair of vertices.     
    We show that it suffices to maintain at most six paths of $\p$ for each path of $G'$
    in order to obtain a support with the same curve complexity as $(G,\mathcal P)$.
    We call this set of retained paths $\tp$.
    Let~$\widetilde G$ be the graph induced by~$\tp$.
    In order to make sure that the curve complexity of $(G,\p)$ does not exceed the curve complexity of $(\widetilde G, \tp)$,
    we construct $\tp$ with the following property: 
    Let~$\widetilde\Gamma$ be a drawing of $(\widetilde G, \tp)$ with optimal curve complexity, let $\Gamma$ be a
    drawing of $(G, \p)$ with optimal curve complexity
 among all drawings extending $\widetilde\Gamma$,
    let $\pi' \in \p \setminus \tp$, and let $\pi$ be the subpath of $\pi'$ in $G'$. Then  $\tp$ contains a path $\widetilde \pi$ with the same subpath $\pi$ in $G'$ that has at least as many bends as $\pi'$. 
    We ensure~this using the following observations. 
    \begin{inparaenum}[(i)]
    \item
    Any path in $\p$ has at most two bends more than its subpath in $G'$.
    \item  
    Let $e=vx$ be an edge of~$G'$.
    If there are two leaves $u$ and $w$ adjacent to~$v$ in $G$ such that~$\p$ contains a path through~$e$ and $vu$
    and a path through~$e$ and~$vw$,
    then at least one of the two paths must bend at~$v$.
    \item
    If at least three paths share
    their sub-path in~$G'$ and extend on both ends to distinct leaves,
    at least one of these paths has two bends next to its leaves. 
    \end{inparaenum}
    
    We start with $\tp = \emptyset$. Let   $\pi$ be any path of $G'$ and let $v,v'$ be its end vertices.  Let $\p(\pi)$ be the set of paths of $\p$ whose intersection with $G'$ is precisely $\pi$.   Moreover, for $i=0,1,2$, let $\p_i(\pi)$ be the set of paths in $\p(\pi)$ containing $i$ leaves of $G$. 
    Let $L_v(\pi), L_{v’}(\pi)$ be the set of leaves adjacent to $v$ and $v’$, respectively, such that some path of $\p_2(\pi)$ ends there and assume w.l.o.g.\ that $|L_v(\pi)| \leq |L_{v’}(\pi)|$. We distinguish several cases. 
    \begin{inparaenum}[(a)]
    \item
    If $\p_1(\pi) \cup \p_2(\pi) = \emptyset$ and $\p_0(\pi) \neq \emptyset$ add $\pi$ to $\tp$. 
        \item
    If $\p_2(\pi)$ contains at most one path $\pi$, we add $\pi$ and up to four paths from $\p_1(\pi)$ to $\tp$, up to two containing a leaf incident to $v$ and up to two containing a leaf incident to $v'$.
        \item If $|L_v(\pi)|\leq 3$ then add for each vertex $u \in L_v(\pi)$ up to two paths
    of  $\p_2(\pi)$ ending in $u$ to~$\tp$.
    \item If there are two vertices $u_1,u_2 \in L_v(\pi)$ such that at least two paths of $\p_2(\pi)$ end in $u_1$ and $u_2$ each, add two for both of them to $\tp$; proceed symmetrically for $L_{v'}(\pi)$.
    \item\label{case:triple} Otherwise, if there are six vertices $u_1,u_2,u_3 \in L_v(\pi)$ and $u'_1,u'_2,u'_3 \in L_{v'}(\pi)$ such that there are $u_i$-$u'_i$-paths for $i=1,2,3$ in $\p_2(\pi)$ add these three paths to $\tp$.
    \item In the remaining cases, $|L_v(\pi)|\geq 4$, $|L_{v'}(\pi)|\geq 4$, and there is at most one vertex $u \in L_v(\pi)$ and at most one vertex $u' \in L_{v'}(\pi)$ where more than one path ends.
    (If in all vertices of $L_v(\pi)$ ($L_{v'}(\pi)$) only one path ends, let $u$ ($u'$) be any of them.)
    In this case, we add up to two paths of  $\p_2(\pi)$ ending in $u$ or $u'$, respectively, to $\tp$, and, if it exists, additionally one path of $\p_2(\pi)$ with no end vertex in $\{u,u'\}$. Observe that there cannot be two such paths since otherwise Case~\ref{case:triple} would hold.
    \end{inparaenum}
    
    Finally, remove all leaves of~$G$ where none of
    the paths of $\tp$ ends. Let the resulting graph be $\widetilde G$.
    This yields a kernel $(\widetilde G,\tp)$ with $\oh(k^2)$ vertices and $\oh(k^2)$~paths having the same curve complexity as $(G,\p)$. \qed
    \end{proof}

Path $p$ \emph{passes through}  vertex $v$, if $v$ is contained in $p$, but not as an end~vertex.

\begin{theorem}
\label{thm:treesGeneralFPT-pathspervertex}
    Determining the (planar) curve complexity 
    for a path-based tree support $(G,\p)$ is in
    XP parameterized by 
    the maximum number $k$ of paths in $\p$ 
    passing through a vertex.
    Deciding if the (planar) curve complexity is
    $b$ is in FPT parameterized by~$k+b$.
\end{theorem}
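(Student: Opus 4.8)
The plan is a bottom-up dynamic program over a rooted copy of $G$, together with two structural observations that keep the state space small. Since for path-based tree supports the planar curve complexity equals the curve complexity, I only handle the unconstrained version, and for a \emph{fixed} bound $b$ it suffices to decide whether $E$ admits an alignment realizable with at most $b$ bends per path; the exact curve complexity is then obtained by trying $b=0,1,2,\dots$ (it never exceeds $|V|$). Root $G$ at an arbitrary vertex $r$. For a non-root vertex $v$ let $e_v$ be the edge to its parent and $T_v$ the subtree rooted at $v$. Call a path $P\in\p$ \emph{active at $v$} if $e_v\in P$ -- equivalently, $P$ has a vertex inside $T_v$ and one outside. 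A path that lies entirely in $T_v$ has its bend count fully determined by the alignments chosen inside $T_v$; the DP accounts for such a path when it reaches the path's apex, i.e.\ its topmost vertex. For each vertex $v$ I would compute a table $D_v$ indexed by \emph{profiles}: assignments $\phi$ of a value in $\{0,\dots,b\}$ to every path active at $v$. Here $D_v[\phi]$ is the minimum, over all alignments of the edges of $T_v$ subject to every path contained in $T_v$ having at most $b$ bends and every path $P$ active at $v$ having exactly $\phi(P)$ bends inside $T_v$ (counting a possible bend at $v$), of the maximum number of bends over the paths contained in $T_v$; it is $\infty$ if no such alignment exists. At the root there is only the empty profile, $D_r[\emptyset]$ equals the curve complexity, and a threshold test reads off whether it is at most $b$.

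First I would bound the table size. A path active at $v$ that does not \emph{pass through} $v$ has $v$ as an endpoint, so its restriction to $T_v$ has no edge and its $\phi$-value is forced to $0$; hence profiles vary only over the at most $k$ paths passing through $v$, so $D_v$ has at most $(b+1)^k$ meaningful entries -- and at most $(|V|+1)^k=|V|^{\oh(k)}$ when we take $b=|V|$ for the XP bound. For the transition at a vertex $v$ with children $c_1,\dots,c_d$ -- the edges at $v$ being $e_v$ (absent if $v=r$) and $f_i := e_{c_i}$ -- I would use that an optimal alignment may be assumed to pair, among the edges at $v$, only \emph{useful} pairs, i.e.\ pairs of edges both traversed by a common path passing through $v$: aligning any other pair reduces no path's bend count, so it can be dropped without loss. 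There are at most $k$ useful pairs, hence at most $2^k$ matchings at $v$ to enumerate. For a fixed matching $M$ I would merge $D_{c_1},\dots,D_{c_d}$ one child at a time while maintaining a table over partial profiles on the \emph{pending} paths, i.e.\ those already seen from a processed child but whose second edge at $v$ has not yet been seen -- equivalently, those that will pass through $v$; this set always has size at most $k$, so the intermediate tables stay of size $(b+1)^{\oh(k)}$. When incorporating $D_{c_j}$, for each path $P$ using $f_j$ I add its value from $D_{c_j}$ to its running count and then inspect $P$'s second edge at $v$: if it is $e_v$, I add $1$ unless $\{f_j,e_v\}\in M$ and keep $P$ pending; if it is $f_i$ with $i<j$, I add $1$ unless $\{f_i,f_j\}\in M$ and finalize $P$ (discarding the partial profile if the count now exceeds $b$, otherwise folding the count into the running maximum); if $P$ ends at $v$, I just finalize it. Finally I append, with value $0$, every path starting at $v$. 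Taking entrywise minima over all matchings $M$ yields $D_v$.

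Correctness rests on two facts: the number of bends of a path equals the number of its internal vertices at which its two incident edges are not aligned; and alignment choices at distinct vertices are mutually independent, since an edge $uv$ may be paired with one edge at $u$ and, separately, with one edge at $v$. Hence the objective -- the minimum over alignments of the maximum per-path bend count -- decomposes along the rooted tree exactly as the DP combines its children, and induction on the tree gives correctness. For the running time, $\sum_v d_v=\oh(|V|)$, and at each vertex we try $\le 2^k$ matchings and run $\oh(d_v)$ merge steps, each combining two tables of size $(b+1)^{\oh(k)}$ (resp.\ $|V|^{\oh(k)}$) in polynomial time; this gives an $f(k+b)\cdot\mathrm{poly}(|V|)$ bound for the threshold test, hence FPT parameterized by $k+b$, and an $|V|^{\oh(k)}$ bound for computing the curve complexity exactly, hence XP parameterized by $k$. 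Storing the whole bend vector (capped at the current best) instead of just the maximum as the table value additionally yields the analogous algorithm for lexicographically maximizing the bend vector.

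The step I expect to be the main obstacle is coping with high-degree vertices: a priori both the number of matchings on the edges at $v$ and the number of ways to pick one child profile per child are exponential in $\deg(v)$ rather than in $k$. The two observations above -- that it suffices to match only the $\le k$ useful pairs, and that merging children incrementally keeps the pending set inside the $\le k$ paths through $v$ -- are precisely what confine the state space to $|V|^{\oh(k)}$. The remaining delicate point is bookkeeping: each bend (an internal vertex of some path where its two edges are not aligned) must be charged exactly once, which is the purpose of the ``finalize a path on seeing its second incident edge at $v$'' rule.
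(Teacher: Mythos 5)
Your proposal is correct and follows essentially the same route as the paper's proof: a bottom-up dynamic program whose states record, for each of the at most $k$ paths through the current vertex, its bend count so far (capped at $b$), combined with an enumeration of edge alignments at each vertex, relying on the fact that any alignment of a tree support is realizable so that choices at distinct vertices are independent. Your two refinements --- merging children incrementally to keep intermediate tables at size $(b+1)^{\oh(k)}$, and restricting to the at most $k$ ``useful'' pairs so that only $2^k$ matchings need be tried rather than the paper's $k!$ bound on matchings of the child edges --- tighten the bookkeeping but do not change the underlying argument.
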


\begin{proof}
	We use bottom-up dynamic programming. Root $G$ at any vertex. For each vertex~$v$, we consider feasible drawings of the subtree $G_v$ rooted at $v$ including the edge $e_v$ between~$v$ and its parent $p_v$. For each such drawing, we store a record that contains for each path $\pi$ of $\p(v) := \{\pi \in \p \mid e_v \in \pi\}$ the number~$b_{\pi}$ of bends in~$G_v$.
    Since $b_{\pi} \in \{0, 1, \dots, b\}$,
    this yields $\oh(b^k)$ different records ($b$ is upper bounded by~$n$).
    Let $B(v)$ be the set of records that we store for $v$. For each record $\langle b_{\pi} \mid \pi \in \p(v)\rangle$ in $B(v)$, we also store the minimum curve complexity $C_v(\langle b_{\pi} \mid \pi \in \p(v)\rangle)$ that can be achieved in $G_v$ with $b_{\pi}$ bends on $\pi \in \p(v)$.
	
	If $v$ is a leaf, $B(v)$ contains a single record where all paths have zero bends.
    Let $v$ be  a non-leaf vertex. 
	Let $c_i$, $i=1,\dots,s$ denote the \emph{active} children~of~$v$, i.e., the children of $v$ that occur on a path in $\p$ with internal vertex $v$.
	Since there are at most $k$ paths through $v$ and each such path contains at most~two edges incident to $v$, it follows that $s \leq 2k$. We have already recursively computed the set $B(c_i)$ of all records for $c_i$. We now try each combination of records in $B(c_1),\dots,B(c_s)$ combined with each possible alignment of edges in  $\{vc_1,\dots,vc_s,vp_v\}$. More precisely, for $i=1,\dots,s$ let $\langle b_{\pi}^i \mid \pi \in \p(c_i)\rangle$ be a record in $B(c_i)$ and consider a possible alignment of the edges in  $\{vc_1,\dots,vc_s,vp_v\}$. We compute the record in $B(v)$ and the respective curve complexity $C_v$ as follows. Set $b_{\pi} = 0$ if $\pi$ starts at $v$ and contains $e_v$. For each path~$\pi$ through $v$ do the following:
	\begin{enumerate}
        \item If $\pi$ contains $c_i,v,p_v$, we set $b_{\pi} = b_{\pi}^i$ if  we aligned the edges $vc_i$ and $vp_v$, and $b_{\pi} = b_{\pi}^i+1$ otherwise. If $b_{\pi} > b$, we reject the resulting record.
		\item\label{it:pathCompleted} If $\pi$ contains $c_i,v,c_j$, for two children $c_i$ and $c_j$ of $v$, we set $b_{\pi} = b_{\pi}^i + b_{\pi}^j$ if we aligned the edges $vc_i$ and $vc_j$, and $b_{\pi} = b_{\pi}^i + b_{\pi}^j + 1$ otherwise. If $b_{\pi} > b$, we reject the resulting record.
	\end{enumerate}	
		If we did not reject the record when handling a path $\pi$ through $v$, add the record $\langle b_{\pi} \mid \pi \in \p(v)\rangle$ to $B(v)$.  
		Obtain the curve complexity from the curve complexities of the newly combined paths, the respective records from $c_1,\ldots,c_s$, and the maximum curve complexity $C_{\max}$ used for inactive child trees.  Namely, we set
		\begin{equation*}
			C_v(\langle b_{\pi} \mid \pi \in \p(v)\rangle)=\max(
				\max_{i=1,\dots,s}C_{c_i}(\langle b_{\pi}^i \mid \pi \in \p(c_i)\rangle),
				\max_{\pi \textup{ path through }v} b_{\pi},
				C_{\max}).
		\end{equation*}
	
	In the process, we may create a record with the same number of bends per path through $e_v$ several times.
	We update the curve complexity if the newly computed record achieves  better curve complexity, otherwise, we keep the old solution. Amongst the valid solutions for the root, we select the one optimizing the curve complexity. The correctness follows from the fact that we exhaustively considered all possibilities, i.e.,  it is obvious that each subtree of a vertex $v$ can be aligned with at most one sibling subtree or with the parent $p_v$ of $v$. Otherwise, new bends are created as computed in the algorithm. Finally, a drawing always exists for each record that does not exceed the bend limit $b$ along a single path.
	
	The runtime is dominated by the number of alignments of subtrees and by the number of records.
    Each of the at most~$k$ subtrees may be aligned
    to another subtree or do nothing.
    This equals the number of matchings in the complete graph~$K_k$,
    for which a (non-tight) upper bound is $k! \in 2^{\oh(k \log k)}$.
	The number of records is in $\oh((b^k)^{2k})$ in each of the $n$ steps as seen before.
    In particular, note that the optimal solution for each inactive child is always the same since we do not need to align it anymore.
    The computations for each combination can be performed in polynomial time. Hence, we obtain an XP algorithm if $b$ is unbounded (i.e., $b \in \oh(n)$) and an FPT algorithm if $b$ is a parameter.
    We want to point out that the proof is constructive and that an alignement yielding the computed curve complexity can be stored throughout the construction. \qed
\end{proof}

Keeping track of the number of paths with a certain number of bends yields:

 \begin{corollary}
    \label{cor:maximizingNumberOf0BendAndSoOn}
    Given a path-based tree support $(G, \p)$,
    finding a drawing that lexicographically maximizes the bend vector
    is in XP parameterized by $k$ where $k$ is the maximum number of paths in $\p$ 
    passing through a vertex.
    Moreover, there is an FPT algorithm parameterized by $k+b$
    where $b$ denotes the maximum number of bends allowed on each path.
\end{corollary}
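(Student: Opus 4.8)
The plan is to run exactly the dynamic program of \cref{thm:treesGeneralFPT-pathspervertex}, but to carry more information in each record: in addition to the per-path bend counts $\langle b_\pi \mid \pi \in \p(v)\rangle$ we store the lexicographically maximum \emph{bend vector of completed paths}, that is, of the paths of $\p$ that already lie entirely inside $G_v$ (including paths that have $v$ as an endpoint) and thus can never receive another bend. Such a vector $(n_0,n_1,\dots)$ has at most $b+1$ entries — with $b\le|V|$ when no bound is imposed — each at most $|\p|$, so it is of polynomial size and two such vectors can be compared lexicographically in polynomial time. This vector replaces the stored minimum curve complexity $C_v$; the bend bound, when given, is still enforced by rejecting any record with some $b_\pi>b$, exactly as before.

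First I would redo the recursion at a non-leaf vertex $v$. For each choice of a record from $B(c_i)$ for every active child $c_i$ ($i=1,\dots,s$, $s\le 2k$) and each alignment of the edges incident to $v$, every path that passes through $v$ and joins two children is \emph{newly completed} at $v$ with the bend count computed exactly as in the two cases of the proof of \cref{thm:treesGeneralFPT-pathspervertex}, and every path in some $\p(c_i)$ that ends at $v$ is newly completed with bend count $b_\pi^i$. The completed bend vector of $G_v$ under this choice is the componentwise sum of: the bends of the newly completed paths, the completed bend vectors stored with the selected records of $c_1,\dots,c_s$, and the precomputed completed contributions of the inactive children. For each record that arises at $v$ we keep the lexicographically largest such vector over all choices; at the root every path of $\p$ is completed, so the lexicographically largest vector stored there is the optimum.

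The only genuinely new ingredient is the following observation, which I would prove first: \emph{lexicographic maximization distributes over componentwise sums of independent contributions}, i.e.\ if each of several nonnegative integer vectors is selected independently from its own finite set, the lexicographic maximum of the sum equals the sum of the lexicographic maxima (proof: coordinate $0$ of the sum is maximized by independently maximizing coordinate $0$ of each summand; restrict each set to those maximizers and recurse). Together with the fact that a record of $v$ is a sufficient interface — the bends accrued by every not-yet-completed path depend only on alignments chosen strictly above $v$, not on how a record of $G_v$ was realized — this shows that keeping one lexicographically maximal completed vector per record loses nothing. It also shows that each inactive child may be handled exactly as $C_{\max}$ is in \cref{thm:treesGeneralFPT-pathspervertex}: its total contribution is a single vector obtained once by lexicographically maximizing, over that child's record table, the sum of its completed vector and the bends of the paths of $\p$ that end at it, and this vector is independent of all other choices.

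For the running time, the number of records per vertex, the number of children-record combinations, and the number of alignments per vertex are exactly as in \cref{thm:treesGeneralFPT-pathspervertex}, while the extra work per combination — summing a constant number of bend vectors and one lexicographic comparison — is polynomial in $|V|$. Hence the algorithm runs in $|V|^{\oh(k^2)}$ time in general (XP in $k$, since $b\le|V|$ when lexicographically maximizing without a bound) and in $f(k,b)\cdot|V|^{\oh(1)}$ time when the bend bound $b$ is part of the input (FPT in $k+b$). The hard part is the correctness argument just outlined — in particular making the distributivity of lexicographic maximization over independent sums precise and verifying that every path of $\p$ is completed exactly once, at the lowest vertex shared by its two halves or at one of its endpoints.
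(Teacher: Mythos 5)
Your proposal is correct and takes essentially the same route as the paper: both modify the dynamic program of \cref{thm:treesGeneralFPT-pathspervertex} by replacing the stored curve-complexity value of a record with the (lexicographically best) bend vector of the paths already completed inside the current subtree, keeping one such vector per record and comparing lexicographically at the end. Your explicit distributivity lemma for lexicographic maxima over independent sums and the bookkeeping of where each path is completed merely spell out details the paper leaves implicit.
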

\begin{proof}
    We slightly modify the dynamic program from \cref{thm:treesGeneralFPT-pathspervertex}.
    Instead of evaluating the quality of the record $\langle b_p \mid p \in \p(v)\rangle$ on the curve complexity $\mathcal{C}_v(\langle b_p \mid p \in \p(v)\rangle)$, we evaluate it in terms of the number of paths with exactly $\beta$ bends that are entirely located within the already finished subtree. Thus, the best record first maximizes the number $n_0$ of 0-bend paths, then amongst records  with $n_0$ $0$-bend paths, it maximizes the number $n_1$ of $1$-bend paths, and so on, up to $b$ (in the FPT case) or $n$ (in the XP case).
    Observe that we still keep a record for each combination of $\langle b_p \mid p \in \p(v)\rangle$ for each allowed number~$\beta$ of bends, i.e., we still explore the full solution space. This shows the XP algorithm in terms of $k$ and the FPT algorithm in terms of $k+b$.
    \qed
\end{proof}

\subsection{%
Cactus Supports}\label{sec:cactus}

We show in \Cref{sec:linear} that every linear path-based cactus support has curve-complexity 0 and how to determine in linear time whether its planar curve-complexity is 0. We use this result for the following theorem.

\begin{restatable}[\restateref{thm:cactusGeneral-0}]{theorem}{cactusGeneral}
	\label{thm:cactusGeneral-0}
	Given a path-based cactus support $(G=(V,E),\p)$,
	it can be decided in 
	near-linear 
	time
	whether its (planar) curve complexity is~0.
\end{restatable}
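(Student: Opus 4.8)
The plan is to reduce the curve-complexity-0 question for a cactus support to a collection of independent curve-complexity-0 questions for \emph{linear} cactus supports, plus a combinatorial check on how paths interact at cut vertices. First I would observe that a straight-line drawing with no bends exists precisely when there is an alignment requirement that (a) aligns every consecutive pair of edges on every path in $\p$, (b) is a valid alignment (at most one partner per edge at each vertex), and (c) is realizable in a planar way. Since each $2$-connected component of a cactus is either a bridge or a cycle, and since a cycle forces three bends on any path that traverses it, the first necessary condition is that no path in $\p$ uses two or more edges of any cycle in a way that would make it bend; in fact, for curve complexity $0$, no path may pass through a cycle at all except by entering and leaving through the same vertex, which is impossible for a simple path, so every path must live inside a single "bridge tree" hanging off the cycles, or traverse exactly one edge of a cycle. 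I would first preprocess: for each cycle $C$, a $0$-bend drawing is impossible unless every path using an edge of $C$ uses exactly one edge of $C$ (so that it bends nowhere on $C$), because a path using two incident edges of the same triangle/cycle must turn. This can be checked in linear time.

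Next I would decompose. After the preprocessing, paths that touch a cycle do so in a single edge; the constraint these impose is purely local to that cycle's vertices and the attached bridge-subtrees. The key structural idea is that the alignment problem decomposes over the block-cut tree: the only shared objects between blocks are cut vertices, and the alignment decision at a cut vertex couples at most the edges incident to it. So I would build, for each cycle $C$, the "augmented" linear support consisting of $C$ together with the tree parts that must be drawn in alignment with edges of $C$, and invoke the result promised for \Cref{sec:linear} — that one can decide in linear time whether a linear path-based cactus support has planar curve complexity $0$ — on each such piece; the tree-only parts reduce to \Cref{thm:treesGeneral-0-1}, whose $2$-SAT formulation already decides curve complexity $\le 1$, and one can restrict it to the "no bend" clauses $\neg x_{e_1 e_2}$ to decide curve complexity $0$. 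Because the support is a cactus, distinct cycles share at most cut vertices, so I would glue the local decisions: a consistent global alignment exists iff each local instance is satisfiable and the (at most constant-size per vertex) choices at cut vertices can be made compatibly, which is itself a small $2$-SAT or matching check per cut vertex. The near-linear running time then comes from summing the linear-time subroutines over a linear-size decomposition, with the log factors (hence "near-linear" rather than "linear") presumably coming from the planarity/realizability test inside the linear-support subroutine of \Cref{sec:linear}.

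The main obstacle I anticipate is the planar realizability condition (c): unlike trees, where any alignment is realizable planarly, on a cactus the cycles create rotation-system constraints — aligning an edge of a cycle with an attached bridge forces the bridge to a specific side, and two such forced bridges at nearby vertices can conflict geometrically even though the alignment requirement is combinatorially valid. Handling this correctly is exactly what the deferred \Cref{sec:linear} result is for, so the proof of the present theorem should be a relatively short reduction once that machinery is in place; the work here is (i) the cycle-traversal preprocessing, (ii) cleanly stating the block-cut-tree decomposition and arguing the couplings are local and of bounded size, and (iii) bounding the running time. I would present it in that order, flagging the linear-support subroutine as a black box and citing the appendix for its proof.
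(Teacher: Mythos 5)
There is a genuine gap at the very first step: your preprocessing claim that ``a path using two incident edges of the same cycle must turn'' (and hence that every path may use at most one edge of any cycle) is false, except for triangles. A path can traverse two or more consecutive edges of a cycle $C$ with all of them collinear; the cycle is then drawn as a convex polygon with flat angles at the traversed internal vertices, which is realizable as long as $C$ retains at least three strictly convex corners elsewhere. The paper makes essential use of exactly this situation: it defines a cycle to be \emph{aligned at a vertex $v$} when the two cycle edges at $v$ lie on a common path, and \cref{thm:cactusLinearPlanar} characterizes planar curve complexity~$0$ precisely in terms of such alignments (at most one cycle aligned at each vertex, and bipartiteness of the constraint graph $\mathcal H(v)$ coupling the cycles through $v$). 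Your reduction would therefore reject feasible instances such as the one in \cref{fig:cactusPlanarCC0}, and it never reaches the genuinely hard part of the problem, namely deciding when alignments of several cycles through a common cut vertex can be realized planarly. Relatedly, ``a cycle forces three bends on any path that traverses it'' conflates corners of the cycle (as a closed polygon) with bends of paths: the three required corners can sit at vertices where paths end or where distinct paths meet, costing no path any bend.

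The paper's route is also simpler than a block-cut-tree decomposition: (1) reject if two paths share an edge at a vertex but continue into different edges (a ``split'', which forces a bend); (2) merge all paths sharing an edge; (3) reject if two merged paths meet in two vertices (they would jointly cover a cycle, which cannot be drawn with only two corners); (4) the instance is now linear, and \cref{thm:cactusLinear,thm:cactusLinearPlanar} apply, with union-find data structures giving the near-linear bound. Your steps gesture at pieces of this, and the instinct to black-box the linear-support result is right, but without correcting the treatment of paths along cycles the argument does not establish the theorem; moreover, invoking the $2$-SAT formulation of \cref{thm:treesGeneral-0-1} for the tree parts would not by itself yield near-linear time as stated.
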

\begin{proof}
	If there are three edges $e$, $e_1$, and $e_2$ incident to a vertex~$v$ and two paths in $\p$, one containing $e$ and $e_1$ and the other containing $e$ and $e_2$ then there is no drawing without bends. Otherwise we merge paths that share an edge into one path. Thus, in the obtained cactus support $(G=(V,E),\p')$ no two paths share an edge and it has (planar) curve complexity zero if and only if  $(G=(V,E),\p)$ does. If there are two paths in $\p'$ that intersect twice, then the two paths would contain a cycle and, thus, cannot both be drawn on a straight line. Otherwise, the support is linear and we can solve the problem in linear time.
	A na\"ive implementation of this algorithm yields cubic running time.
	Using union-find data structures, we prove the near-linear running time in the appendix. \qed
\end{proof}

Using \cref{thm:cactusGeneral-0} and also keeping track of the number of bends per cycle, we generalize the dynamic program from trees to cactus~supports.

\begin{restatable}[\restateref{thm:extensionToCacti}]{theorem}{extensionToCacti}
    \label{thm:extensionToCacti}
    \cref{thm:treesGeneralFPT-pathspervertex,cor:maximizingNumberOf0BendAndSoOn} generalize to path-based cactus~supports.
\end{restatable}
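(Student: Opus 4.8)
The plan is to run the same bottom-up dynamic program as in \cref{thm:treesGeneralFPT-pathspervertex}, but over a tree-like decomposition of the cactus rather than the cactus itself, and to enrich the records so that the constraint ``every cycle has at least three bends'' is respected. First I would fix a planar embedding if one is required (for the planar variant), root the cactus at an arbitrary vertex, and form its \emph{block-cut tree}: the blocks are bridges and simple cycles, and the cut vertices separate them. I would process blocks bottom-up. For a bridge, the update is verbatim the tree case. The new ingredient is the handling of a cycle block $C$ with its (unique) vertex $r$ closest to the root acting as the block's ``parent attachment'': here I would enumerate, over all ways of choosing how many bends each path restricted to $C$ contributes and how the edges at each vertex of $C$ are aligned (subject to the planar-embedding constraint at vertices shared with other blocks), together with the already-computed records of the subcacti hanging off the non-$r$ vertices of $C$. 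Crucially, by \cref{thm:cactusGeneral-0} (more precisely by the linear-time routine for linear cactus supports in \Cref{sec:linear}) one knows exactly when a set of paths confined to a single cycle can be realized with a prescribed bend budget on that cycle; and one always respects that a cycle, being a closed polygon, forces at least three bends distributed among the edges of $C$ — these three (or more) forced bends must be charged to the paths traversing those edges, which is precisely what the enriched record bookkeeping tracks.

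Concretely, the record stored at a cut vertex $v$ would, as before, record for every path $\pi\in\p(v)$ through the edge to $v$'s parent block the number $b_\pi\in\{0,\dots,b\}$ of bends accumulated so far; when a cycle block $C$ is glued in, I would iterate over bend-distributions on the edges of $C$ summing to at least $3$, propagate those bends into the $b_\pi$ counters of the paths that use those edges (plus the ``turning'' bends each path incurs at the cut vertices of $C$ where it is not aligned), reject any record exceeding $b$, and take the running maximum with the stored curve complexities of the pendant subcacti and with $C_{\max}$ for inactive children exactly as in the tree algorithm. For \cref{cor:maximizingNumberOf0BendAndSoOn}, one again replaces the curve-complexity evaluation by the lexicographic bend-vector evaluation, with the additional observation that cycles themselves are not paths in $\p$ and so contribute nothing to the $n_i$ counts directly — only the paths carrying the cycles' forced bends do, and those are already tracked.

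For the running time: the number of vertices of any single cycle block that also separate other blocks is at most $k+O(1)$ (a vertex of $C$ through which no path passes and which carries no pendant subcactus can be contracted away just as degree-2 vertices were suppressed in earlier proofs), so the number of alignment choices on one cycle is $2^{O(k)}$, the number of bend-distributions on a cycle summing to $\le b$ is $O(b^{O(k)})$, and the number of combined records is $O((b^k)^{O(k)})$, matching the XP-in-$k$ / FPT-in-$k+b$ bounds of \cref{thm:treesGeneralFPT-pathspervertex} up to the polynomial factors. The main obstacle I expect is the cycle update itself: one must argue that ``which pairs of consecutive edges of $C$ are aligned'' together with the pendant records is a \emph{complete} description of a realizable local configuration, i.e.\ that any such combinatorial choice with at least three non-alignments around $C$ — and consistent with the pendant subdrawings and, in the planar case, with the embedding — is actually realizable by an honest straight-line drawing of the cycle as a convex-or-nonconvex polygon without forcing extra bends on pendant edges; this is where the linear-cactus machinery of \Cref{sec:linear} and a careful ``a cycle with $t\ge 3$ corners can be drawn with its straight sides pointed in any prescribed cyclic slope pattern'' lemma do the real work, and I would spell that out before the routine record-merging bookkeeping.
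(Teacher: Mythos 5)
Your overall plan coincides with the paper's: extend the dynamic program of \cref{thm:treesGeneralFPT-pathspervertex} over a tree obtained from the cactus, augment the records so that every cycle is forced to have at least three bends, and invoke the linear-cactus machinery of \Cref{sec:linear} (\cref{thm:cactusLinear,thm:cactusLinearPlanar}) to certify realizability. The paper implements this by deleting one edge $e_C$ per cycle (incident to the cycle's root $r_C$), running the vertex-by-vertex tree DP on the resulting spanning tree, carrying the paths over $e_C$ in the records (so records track up to $2k$ paths), and maintaining a single counter $d\in\{0,1,2,\ge 3\}$ per cycle that is incremented as the DP walks up the cycle and is checked when the cycle is closed at $r_C$.

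There is, however, a concrete flaw in your running-time argument. You claim that the number of vertices of a cycle block that matter for the enumeration is at most $k+\oh(1)$, because vertices carrying no passing path and no pendant subcactus can be contracted. This is false for the parameter at hand: $k$ bounds the number of paths through any \emph{single} vertex, not the total number of relevant vertices on a cycle. A cycle of length $\ell$ whose edges are covered by $\ell$ short paths (say path $i$ being $\langle v_{i-1},v_i,v_{i+1}\rangle$) has $k=1$ yet every vertex carries a passing path, so none can be contracted, and your batch enumeration of ``all alignments at each vertex of $C$ together with a bend distribution on $C$'' ranges over $2^{\Omega(\ell)}$ configurations --- not XP in $k$. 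The fix is exactly the paper's incremental treatment: process the cycle's vertices one at a time inside the tree DP and compress the cycle's history into the bounded counter $d$, so that at each step only the $\oh(1)$ local alignment choices and the $\oh(b^{2k})$ records are enumerated. Separately, you correctly identify that realizability of a chosen alignment (especially in the planar case) is the delicate point, but you defer it; the paper resolves it by splitting the current paths at their bends, merging paths sharing an edge, and checking the bipartiteness/alignment conditions of \cref{thm:cactusLinearPlanar} locally at the current vertex for each candidate record. As written, your proposal does not establish the claimed XP/FPT bounds.
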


\section{Orthogonal Drawings for Plane 4-Graph Supports}\label{sec:orthogonal}

Metro maps are often layed out on an octilinear grid. Here, we consider the more restrictive orthogonal grid. See \cref{fig:4plane-example_intro}.

\begin{restatable}[\restateref{thm:tamassia}]{theorem}{tamassia}
\label{thm:tamassia}
    Given a path-based plane 4-graph support $(G,\p)$, an orthogonal drawing of $G$ with the property that 
    the sum of the bends in all paths is minimimum among all orthogonal drawings of $G$ with the given embedding can be computed in near-linear time.
\end{restatable}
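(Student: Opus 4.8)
The plan is to reduce the problem to a single minimum-cost flow computation in a reweighted version of Tamassia's bend-minimization network. Recall that, for a connected plane graph of maximum degree~$4$ with the prescribed embedding, Tamassia's network~$N$ has a node for every vertex and every face; every vertex node is a source of $4$ units (its four right angles), every internal (resp.\ external) face~$f$ is a sink demanding $2\deg(f)-4$ (resp.\ $2\deg(f)+4$) units; an arc $v\to f$ with lower bound~$1$ carries the angle of $f$ at $v$ in units of $90^\circ$; and for every edge~$e$ on the common boundary of faces $f$ and $g$ there are arcs $f\to g$ and $g\to f$ whose flows count the bends along~$e$ that are convex towards~$f$ and towards~$g$, respectively. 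Feasible integral flows in $N$ correspond to valid orthogonal shapes of $G$, and any such shape can be turned into an actual orthogonal drawing without changing any angle or bend (hence without changing any path's bend count) in near-linear time. Since a bend on an edge~$e$ is a bend for \emph{every} path of~$\p$ using~$e$, I would first set $w(e):=|\{P\in\p : e\in P\}|\ge 1$ and put cost~$w(e)$ on both arcs associated with~$e$; then the flow cost on all face-to-face arcs equals $\sum_e w(e)\cdot(\text{bends on }e)=\sum_{P\in\p}(\text{edge-bends traversed by }P)$.

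It remains to charge the bends a path makes \emph{at a vertex}. For a vertex~$v$ and an unordered pair $\{e,e'\}$ of edges at~$v$, let $t_v(\{e,e'\})$ be the number of paths of~$\p$ passing through~$v$ with $e,e'$ as their two path-edges at~$v$; such a path bends at~$v$ exactly when $e$ and~$e'$ are not opposite at~$v$, i.e.\ when the angle between them is not $180^\circ$, which depends only on the angles $a_1,\dots,a_{\deg(v)}$ around~$v$. I would fold this cost into the arcs $v\to f$, distinguishing by degree. If $\deg(v)=4$, all angles are $90^\circ$, so every non-opposite pair bends and every opposite pair does not: add the constant $\sum_{\{e,e'\}\text{ non-opposite}}t_v(\{e,e'\})$ to the objective and keep cost~$0$. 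If $\deg(v)=2$, the turn cost is $t_v\cdot[a_1\ne 2]$ in the angle $a_1\in\{1,2,3\}$ of one of the two corners at~$v$, a convex (V-shaped) function; realize it by splitting the arc $v\to f_1$ into unit-capacity pieces of marginal costs $0,-t_v,+t_v$ and adding the constant $t_v$ to the objective. If $\deg(v)=3$, each $a_k\in\{1,2\}$ with exactly one equal to~$2$, and the path bends at all three pairs except the one whose separating corner~$f_k$ has angle~$2$; so add $T_v:=\sum_{\text{all }3\text{ pairs}}t_v(\cdot)$ to the objective and give the arc $v\to f_k$ cost~$0$ at flow~$1$ and cost $-t_v(\{e_k,e_{k+1}\})$ at flow~$2$ (two unit-capacity parallel arcs). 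Degree-$1$ vertices need no change. Each incidence adds only $O(1)$ arcs, so $N$ keeps $O(n)$ nodes and arcs and total supply $O(n)$; the only negative-cost arcs are these optional angle-units, and since vertex nodes have no incoming arcs they lie on no directed cycle while every face-to-face arc has cost $w(e)\ge 0$, so $N$ has no negative-cost cycle and the minimum-cost flow is well defined.

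By construction, a minimum-cost feasible flow in the modified network encodes an orthogonal shape of~$G$ minimizing $\sum_e w(e)\cdot(\text{bends on }e)+\sum_v(\text{turns of paths at }v)=\sum_{P\in\p}(\text{bends on }P)$, up to the additive constants introduced above; realizing this shape yields the claimed drawing. For the running time: computing all $w(e)$ and all $t_v(\{e,e'\})$ needs one scan over the paths in $O(n+\|\p\|)$ time, building $N$ takes $O(n)$, the minimum-cost flow runs in near-linear time (exploiting that $N$ is planar together with near-linear planar min-cost-flow algorithms, or via the recent almost-linear min-cost-flow algorithm for general graphs), and drawing the resulting shape takes near-linear time; disconnected inputs are handled componentwise, nested as prescribed by the embedding. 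I expect the main obstacle to be the vertex-turn bookkeeping: recognizing that bends at vertices must be counted separately from edge-bends, verifying they are determined by the local angles, and encoding them as arc costs in~$N$ without double counting — in particular the degree-$3$ case, where rewarding a straight $180^\circ$ angle introduces a negative-cost (but provably cycle-free, hence harmless) arc.
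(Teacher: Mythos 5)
Your proposal is correct and follows essentially the same route as the paper: adapt Tamassia's flow network, put cost $k(e)$ (the number of paths through $e$) on the face-to-face arcs, and charge the vertex turns via degree-dependent costs on the vertex-to-face arcs, solved by a near-linear min-cost flow. The only difference is cosmetic: you encode the degree-2 and degree-3 corner costs with additive constants plus negative marginal costs (correctly arguing there is no negative cycle), whereas the paper charges the \emph{complementary} pairs directly, e.g.\ cost $k(e_1,e_3)+k(e_2,e_3)$ for aligning $e_1,e_2$ at a degree-3 vertex, which keeps all arc costs nonnegative.
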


\begin{proof}[Sketch]
    We adapt methods for bend minimization for orthogonal drawings of plane $4$-graphs. The key observation is that we now count bends on paths of $\p$ instead of bends on edges of the support graph $G$. This leads to the following two key differences in the evaluation of an orthogonal drawing of $G$:
        First, bends on an edge $e$ actually correspond to $k(e)$ bends where $k(e)$ denotes the number of paths in $\p$ containing $e$.
        Second, while the alignment of edges at a vertex $v$ is not directly contributing to the bends in the drawing of the support graph, vertex $v$ might be an interior vertex of some paths in $\p$ and we have to account for the number of paths that bend at each vertex $v$. To this end, we observe that the evaluation depends on the degree of $v$: 
        \begin{inparaenum}
            \item If $v$ has degree $1$, it is not an interior vertex of any path. 
            \item If $v$ has degree $4$, there is no choice to be made.
            \item If $v$ has degree $2$, we can choose whether to align its two incident edges (in this case, no path bends at $v$) or not to align them (in this case, all paths with interior vertex $v$ bend at $v$).
            \item If $v$ has degree $3$, we can align exactly one pair of its three incident edges. All paths traversing the other two pairs of incident edges bend at $v$. 
        \end{inparaenum} 
    In \Cref{app:tamassia}, we describe how to adjust the approach of Tamassia \cite{tamassia:87} for bend-minimum orthogonal drawings to correctly evaluate the number of bends along the paths of $\p$ which results in a near-linear time~\cite{brand_etal:focs23} algorithm for minimizing the total number of bends along the paths in $\p$.\qed
\end{proof}

\section{%
    Open Problems}
    \label{sec:op}
Our results motivate open problems for drawing path-based tree or cactus supports with~few bends. What is the complexity of
\begin{inparaenum}[(a)]
    \item minimizing the total number of bends for a path-based cactus support,
    \item minimizing the curve complexity
    and lexicographically maximizing the bend vector
    for a plane 4-graph support,
    \item deciding whether the curve complexity is two for a tree support, or one or two for a cactus support,
    \item maximizing the number of paths with no bends,
    \item lexicographically maximizing the bend vector for trees of maximum degree 3,
     and
    \item computing a path-based cactus support.
    Moreover, \item is the curve complexity W[1]-hard if parameterized by the maximum number of paths through~a~vertex?
\end{inparaenum}

\begin{credits}
	\subsubsection{\ackname} 
	This work was initiated at the Annual Workshop on Graph and Network Visualization (GNV~2024), Heiligkreuztal, Germany, June 2024.
	
	\subsubsection{\discintname}
	 The authors have no competing interests to declare that are
	relevant to the content of this article. 
\end{credits}
\bibliographystyle{splncs04}
\bibliography{references}

\clearpage

\appendix

\section{Appendix: Missing Proof Details for NP-Completeness}\label{app:complete}

\treeGeneralCurveComplexityHard*
\label{thm:treeGeneralCurveComplexityHard*}
\begin{proof}[of Case (ii).]
	We modify the construction to have maximum degree~3; see \cref{fig:treeGeneralCurveComplexityHardBoundedDeg} for an illustration.
	There is one copy of each clause gadget. 
	We again have a central vertex~$x$ but, instead of directly connecting it with all variable gadgets,
	we connect it via a binary tree.
	More precisely, $x$ is the root of a complete binary tree having the central vertices of the variable gadgets as leaves
	(the binary tree is perfect if $n$ is a power of~2).
	Similarly, we have a second central vertex~$c$ that connects directly to~$x$ and connects the clause gadgets via a binary tree.
	More precisely, $c$ is the root of a complete binary tree having the central vertices of the clause gadgets as leaves
	(the binary tree is perfect if $m$ is a power of~2).
	Again, the set of paths contains a path from any leaf labeled~$v_i$ or $\neg v_i$ of a clause gadget of a clause~$c_j$
	to the leaf labeled~$v_i$ or~$\neg v_i$ in the variable gadget of~$v_i$, depending on whether~$v_i$ or~$\neg v_i$ appears in~$c_j$. 
	All variable--clause paths have the same length, namely, the number of inner vertices of an $x_i$--$c_j$-path is $\lceil\log_2 n\rceil + \lceil\log_2 m\rceil$.
	
	To ensure that all variable--clause paths
	bend the same number of times when traversing the two binary trees, we replace the internal vertices by \emph{alignment gadgets} which ensure that any variable--clause path will bend at any internal vertex except possibly within the variable or the clause gadget.
	The alignment gadget for vertex $u$ consists of two perfect binary trees $T^u_1$ and $T^u_2$ of height $\lceil\log_2 n\rceil + \lceil\log_2 m\rceil + 1$ each, connected by a path $P^u$ of length four. 
	The set of paths contains a path from every leaf in $T^u_1$ to every leaf in $T^u_2$.
	One of these paths must bend $b$ times within $T_1$ and $T_2$, so  the edges of $P^u$ must be aligned
	at the inner vertices of $P^u$ in any drawing with at most $b$ bends per path.
	Thus, each variable--clause path bends $2\cdot(\lceil\log_2 n\rceil + \lceil\log_2 m\rceil)$ times
	strictly between a vertex labeled~$x_i$ and a vertex labeled~$c_j$. 
	
	Analogously to Case~(i), within each clause gadget there is one path $P$ that bends twice, say at the path to a vertex labeled $v_i$. So there is a drawing with at most $b$ bends per path if and only if $P$ does not bend at the root $x_i$ of the respective variable gadget.
	This shows the equivalence of satisfying truth assignments
	and drawings with at most $b$ bends. \qed
    \end{proof}

\nphardMaxZeroOneBends*
\label{thm:nphard-max0bendthen1bend*}
\begin{proof}
    Since the problem on trees is purely combinatorial, containment in NP is clear.
    We prove NP-hardness by a reduction from Not-All-Equal-3-SAT (NAE-3-SAT)
    where the goal is to determine an assignment of truth values to the variables
    so that each clause $(\ell_1 \lor \ell_2 \lor \ell_3)$ contains one or two satisfied literals from the literals $\{\ell_1, \ell_2, \ell_3\}$, i.e.,
    in a valid solution for NAE-3-SAT, both $(\ell_1 \lor \ell_2 \lor \ell_3)$ and $(\neg \ell_1 \lor \neg \ell_2 \lor \neg \ell_3)$ are true.
    Given a NAE-3-SAT formula $\Phi$ with $n$ variables and $m$ clauses, we construct a tree support $(G=(V,E),\p)$
    and determine values $n_0$ and $n_1$ as follows; see also \cref{fig:min0and1bendhard} for an illustration.
    
    \begin{figure}
        \centering
        \includegraphics[page=21]{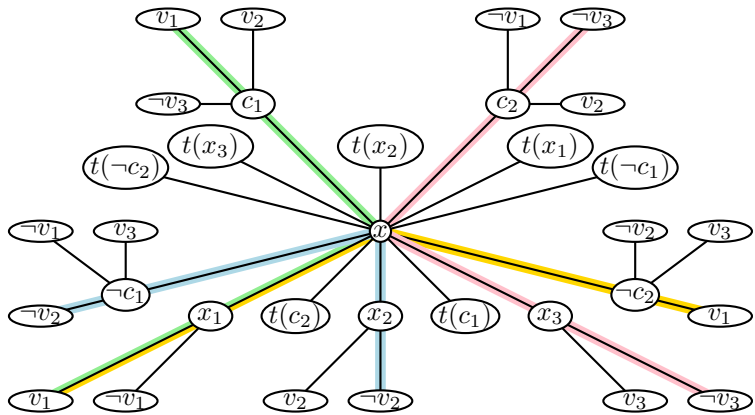}
        \caption{Graph corresponding to NAE-3-SAT instance with  clauses $c_1 = (v_1 \lor v_2 \lor \neg v_3)$
            and $c_2 = (\neg v_1 \lor v_2 \lor \neg v_3)$.
            The drawing corresponds to the truth assignment where $v_1$, $v_2$, and $v_3$ are true.
            The four paths with 1 bend are highlighted.}
        \label{fig:min0and1bendhard}
    \end{figure}
    
    As in the proof of \cref{thm:treeGeneralCurveComplexityHard},
    there is a variable gadget for each variable $v_i$, $i=1,\dots,n$,
    which is a star consisting of a central vertex~$x_i$ together with two leaves labeled $v_i$ and $\neg v_i$.
    Moreover, for each clause $c_j=(\ell_1\lor \ell_2\lor \ell_3)$ with $j=1,\ldots,m$, we create two clause gadgets, namely,
    the \emph{positive clause gadget} consisting of a star on $4$ vertices $\ell_1,\ell_2,\ell_3,c_j$ with center $c_j$ and
    the \emph{negative clause gadget} consisting of a star on $4$ vertices $\neg \ell_1,\neg \ell_2,\neg \ell_3,\neg c_j$ with center $\neg c_j$.
    In addition, we introduce a central vertex $x$ which is connected to each central vertex $x_i$ with $i=1,\ldots,n$ of the $n$ variable gadgets
    and to all central vertices~$c_j$ and~$\neg c_j$ with $j=1,\ldots,m$ of the $2m$ clause gadgets.
    Finally, we add $n+2m$ \emph{tail vertices} $t(x_i)$ with $i=1,\ldots,n$ and $t(c_j)$ and $t(\neg c_j)$ with $j=1,\ldots,m$ that are also connected to $x$.
    
    This concludes the construction of $G$.
    Next, we discuss the set of paths $\p$.
    Namely, $\p$ consists of two sets of paths $\p_0$ and $\p_1$.
    For each variable gadget, we have a path $x_i,x,t(x_i)$ in $\p_0$ for $i=1,\ldots,n$, and,
    for each clause gadget, we have two paths $c_j,x,t(c_j)$ and $\neg c_j,x,t(\neg c_j)$ in $\p_0$ for $j=1,\ldots,m$.
    Finally, for $\p_1$ we add a path between each literal $\ell_i$ occurring in a clause gadget and the same literal (i.e., $v_i$ or $\neg v_i$) in the variable gadget of~$v_i$.
    Also, we add a path between each literal $\neg \ell_i$ occurring in a clause gadget with $\neg \ell_i$ in the variable gadget of~$v_i$.
    This concludes the construction of $\p$.
    Observe that $|\p_0|=n+2m$ whereas $|\p_1|=6m$.
    We finish the reduction by setting $n_0=n+2m$ and $n_1=2m$.
    
    It remains to show that $(G,\p)$ admits a drawing with $n_0$ paths each having $0$ bends and $n_1$ paths each having $1$ bend if and only if $\Phi$ is a yes-instance.
    First, assume that $\Phi$ is a yes-instance.
    We align the paths $x_i,x,t(x_i)$ for $i=1,\ldots,n$ and the paths $c_j,x,t(c_j)$ and $\neg c_j,x,t(\neg c_j)$ for $j=1,\ldots,m$.
    This creates $n_0=n+2m$ paths with $0$ bends as required.
    Then, if $v_i$ is true in the satisfying truth assignment, we align edges $v_ix_i$ and $x_ix$, otherwise,
    i.e., if $v_i$ is false, we align edges $\neg v_ix_i$ and $x_ix$.
    Moreover, for clause $c_j=(\ell_1 \lor \ell_2\lor \ell_3)$ let w.l.o.g.\ $\ell_1$ be true and $\ell_2$ be false.
    Then, we align the edges $c_j\ell_1$ and $xc_j$ as well as $\neg c_j\neg \ell_2$ and $x\neg c_j$.
    As a result, the paths ending at $\ell_1$ and $\neg \ell_2$ have one bend each (namely at~$x$) resulting in $n_1=2m$ paths with one bend.
    The remaining paths (which all pass through a clause gadget) have one bend in $x$ and one bend in the central vertex of the clause gadget.
    
    Second, assume that $(G,\p)$ admits a drawing with $n_0$ paths each having $0$ bends and $n_1$ paths each having $1$ bend.
    Observe that at most $\max \{2m, n\}$ paths of $\p_1$ can be realized with $0$ bends
    since in each clause gadget, three paths of $\p_1$ share the edge between $x$ and the central vertex of the clause gadget
    but then continue along a different edge towards a vertex representing a literal.
    Similarly, each of the $n$ variable gadgets can be aligned with only one clause gadget, in which only one path can have $0$ bends.
    Assume for a contradiction that $k > 0$ paths from $\p_1$ are drawn with $0$ bends.
    Each of these $k$ paths traverses a different clause gadget with center $c^*$ so that the two edges $c^*x$ and $xx_i$ for some variable $v_i$ are aligned.
    Thus, the edges $c^*x$ and $t(c^*)x$ and the edges $x_ix$ and $t(x_i)x$ cannot be aligned which implies that
    $2k$ paths from $\p_0$ are drawn with more than $0$ bends.
    So, there are at most $|P_0|-(2k)+k=n+2m-k<n_0$ paths with $0$ bends in $\p$; a contradiction.
    Thus, all paths in $\p_0$ have $0$ bends.
    
    Hence, it remains to discuss the paths with $1$ bend which are $2m$ paths from $\p_1$.
    Observe that each path in $\p_1$ bends at $x$ since it contains an edge $x_ix$ and an edge $xc_i$ or $x\neg c_i$ while $x_ix$ is aligned with $t(x_i)x$.
    Moreover, since, in each clause gadget, three paths of $\p_1$ still share the edge between $x$ and the central vertex of the clause gadget,
    but then continue along a different edge towards a vertex representing a literal,
    we have that there is exactly one path of $\p_1$ with one bend traversing each clause gadget. 
    
    We now assign the variable $v_i$ to true if and only if edges $v_ix_i$ and $x_ix$ are aligned.
    Consider now a clause $c_j$.
    Each of its two clause gadgets must be traversed by a path bending only at $x$.
    W.l.o.g., let $P=v_i,x_i,x,c_j,v_i$ and $P'=\neg v_k,x_k,x,\neg c_j,\neg v_k$ be these paths for some variables $v_i$ and $v_k$,
    that is, $v_i$ and $v_k$ are literals in $c_j$ (the cases where one or both of the occurring literals are negative are symmetric).
    Since $P$ and $P'$ bend at $x$, we have that $v_ix_i$ and $x_ix$ are aligned, i.e., we have set $x_i$ to true.
    Moreover, we have that $\neg v_kx_k$ and $x_kx$ are aligned, i.e., we have set $x_k$ to false.
    Thus, $c_j$ contains both a true and a false literal, i.e., $c_j$ is satisfied.
    This concludes the proof. \qed
\end{proof}

\section{Appendix: Missing Proof Details for Trees}\label{app:trees}

\treeGeneralTotal*
\label{thm:treeGeneral-total*}

\begin{proof}
    It remains to analyze the running time.
    To construct the complete graphs $K(v)$ for each vertex $v \in V$ and to compute their edge weights,
    we use
    $\oh(\|\p\| + |V|^2)$  time in total:
    First, construct the complete graphs $K(v)$ for each $v \in V$ in overall $\oh(|V|^2)$ time.
    Initialize all edge weights with~0.
    Note that each pair of complete graphs constructed in this process is edge-disjoint.
    We assume that the edges of the input graph are enumerated and stored in an $|E|\times |E|$ array storing pointers to the edges in the matching graph for a direct access.
    Since, $|E| \in \oh(|V|)$, this table has size $\oh(|V|^2)$ and can be computed in time $\oh(|V|^2)$.
    Then, we go in $\oh(\|\p\|)$ time over all paths
    and add for each pair of adjacent edges $uv$ and $vw$ on a path
    a weight of~1 to the edge $uw$ in the corresponding complete graph $K(v)$.

    In a graph with $n$ vertices and $m$ edges, which have positive integral edge weights of maximum weight~$W$,
    a maximum-weight matching can be computed in
    $\oh(m \sqrt{n \log n \alpha(m, n)} \log(nW))$ time~\cite{gabow:matching91}, where $\alpha$ is the inverse Ackermann function.
    In our case $W \in \oh(|\p|)$, $n \in \oh(|N(v)|)$, $m \in \oh(|N(v)|^2)$,
    which leads to a running time in $\oh(|N(v)|^{2.5} \log(|N(v)| |\p|) \sqrt{\log N(v) \alpha(|N(v)|^2, |N(v)|)})$ for $v$'s maximum-weight matching.
    For all vertices, this can be upper bounded by
    $\oh(|V|^{2.5}\log(\max\{|V|,|\p|\}) \sqrt{\log |V| \alpha(|V|^2, |V|)})$,
    which we in turn upper bound by
    $\oh(|V|^{2.5}\log^2(\max\{|V|,|\p|\}))$.
    
    Since $\max\{|V|,|\p|\} \leq \|\mathcal P\| \leq |V| \cdot |\p|$,
    this gives a running time in 
    $\oh(\|\p\| + |V|^2 + |V|^{2.5}\log^2\|\p\|) \subseteq \oh(|V| \cdot |\p| + |V|^{2.5}\log^2(\max\{|V|,|\p|\}))$ for both steps combined. \qed
\end{proof}

\kernelSize*
    \label{cl:kernelSize*}

    \begin{proof}
    We use  induction on $n'$.
    First, observe that if $n' = 1$, then $k = 1$ and $2k - 1 = 1 = n'$.
    Now, assume that $n' > 1$ and let~$v$ be a leaf of~$G'$ and let~$w$ be the neighbor of~$v$.
    We distinguish two cases.
    
    If~$w$ has degree greater than two or if $w \in S$,
    then let~$G''$ be the graph obtained from~$G'$ by removing~$v$.
    Then $S'=S\setminus\{v\}$ is a vertex cover of~$G''$ and each leaf of~$G''$ is in~$S'$.
    By the inductive hypothesis, the number of vertices of~$G''$ is at most $2(k-1)-1$,
    thus the number of vertices of $G'$ is at most $2(k-1)-1+1 = 2k-2 < 2k-1$.
    
    Otherwise $w \notin S$ and the degree of~$w$ is at most two.
    In fact, the degree of~$w$ is two because if it were one,
    then $w$ would be a leaf that is in~$S$ and we would be in the first case.
    Hence, let $u\neq v$ be the other neighbor of~$w$.
    Observe that $u \in S$.
    Let~$G''$ be the graph obtained from~$G'$ by removing~$v$ and~$w$.
    Then, $S'=S\setminus\{v\}$ is a vertex cover of $G''$ and each leaf of~$G''$ is in~$S'$.
    By the inductive hypothesis, the number of vertices of~$G''$ is at most $2(k-1)-1$,
    so the number of vertices of~$G'$ is at most $2(k-1)-1+2 = 2k-1$. \qed
    \end{proof}

\section{Appendix: Missing Proofs for Cacti}\label{app:cacti}

In this section, we provide the missing proof details for \cref{thm:cactusGeneral-0} and \cref{thm:extensionToCacti}.

\subsection{Linear Cactus Supports}\label{sec:linear}

In this section, we consider linear path-based cactus supports. We show that the curve complexity is zero, the planar curve complexity is at most one, and it can be decided in linear time whether the planar curve complexity is zero or one. This result is used in the proof of \cref{thm:cactusGeneral-0} where we show how to decide in near-linear time whether the (planar) curve complexity of a path-based cactus support is zero.

\begin{theorem}
	\label{thm:cactusLinear}
	The curve complexity of 
	a linear path-based cactus support is 0.
\end{theorem}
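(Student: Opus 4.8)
The plan is to show that any linear path-based cactus support $(G,\p)$ admits a straight-line drawing in which no path bends, i.e., every pair of consecutive edges on every path is drawn as a straight segment. The key structural fact we exploit is linearity: any two paths of $\p$ share at most one vertex, so at no vertex $v$ do two distinct paths ``compete'' for the same alignment in a conflicting way. More precisely, consider the alignment requirement where, for each path $P \in \p$ and each internal vertex $v$ of $P$ with incident path-edges $e,e'$, we require $e$ and $e'$ to be aligned. First I would argue this set of requirements is a valid alignment in the sense of the preliminaries: at a vertex $v$, each edge $e$ can be forced to be aligned with at most one other edge. Indeed, if $e$ were required to align with both $e'$ and $e''$, then $e=vw$ lies on two paths $P,P'$ that both contain $v$ and $w$ — contradicting linearity (they share the two vertices $v,w$, hence more than one vertex). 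So at every vertex the forced alignments form a partial matching on the incident edges, which is exactly an alignment requirement.

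The second step is to realize this alignment by an actual planar straight-line drawing. Here I would invoke the remark already made in the preliminaries that an alignment requirement on a \emph{tree} support can always be realized in a planar way, but a cactus has cycles, so some care is needed. The idea is: an alignment requirement that forces a whole cycle $C$ to be drawn with all its edges mutually aligned is impossible (a cycle needs at least three bends); so I must check the forced alignments never do this. Suppose a cycle $C = \langle v_0, v_1, \dots, v_{t-1}, v_0\rangle$ had all consecutive edge pairs forced aligned. The pair $(v_{i-1}v_i, v_i v_{i+1})$ is forced only because some path $P$ has $v_i$ as an internal vertex traversing both edges. If this happened at every vertex of $C$, then — walking around — consecutive such paths would overlap in edges; merging overlapping paths (they agree on shared edges since $G$ is a cactus and the edges lie on the unique cycle $C$) would produce a single path containing all of $C$, but a simple path cannot contain a full cycle. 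Hence at least one vertex of each cycle is not an internal alignment point, so no cycle is fully constrained, and the alignment is ``tree-like'' on each $2$-connected component.

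The third step is the geometric construction. I would build the drawing incrementally over the block-cut tree of the cactus: place each bridge as a free segment, and for each cycle $C$, since at least one vertex carries no forced alignment, open $C$ at that vertex into a path and draw the remaining aligned chains as straight segments pointing in generic directions, closing the cycle with the one ``free'' vertex absorbing the leftover turn — this is possible because a polygon with one unconstrained angle and all other vertices' incident-edge directions consistent with the required alignments can always be realized (e.g. draw the constrained chain on a slightly convex arc). Since $\p$ is linear, the alignments forced at the shared vertex of two components do not interact with alignments inside either component (at a cut vertex, the edges into different blocks are never forced aligned with each other unless a single path passes through, and even then it involves edges in at most one cycle plus possibly one bridge, which is freely orientable). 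Composing these block drawings along the block-cut tree, rescaling and rotating blocks as needed to avoid overlaps, yields a planar straight-line drawing with zero bends on every path.

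The main obstacle I expect is the cycle-realizability argument in the geometric step: one must be careful that ``open the cycle at the one free vertex'' genuinely works when several aligned chains around the same cycle must be simultaneously straight, and that the one free angle has enough slack — this is where linearity is used again to guarantee there is exactly such a free vertex per cycle (if there were zero, we'd have derived a path containing a full cycle, a contradiction). A clean way to finish is to note that opening each cycle at a free vertex turns $G$ into a tree support $T$ carrying the same (now genuinely tree-supported) alignment requirement, apply the already-cited fact that tree alignments are planar-realizable, and then re-identify the split vertex, checking the two images can be merged without creating a bend (the free vertex has no alignment constraint, so its two incident cycle-edges may meet at any angle). This reduces the whole statement to the tree case plus a local merging lemma, which I would isolate as the technical heart of the proof.
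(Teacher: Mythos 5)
Your combinatorial setup is fine: by linearity the forced alignments (align the two path-edges at every internal vertex of every path) form a partial matching at each vertex, exactly as you argue. The gap is in the geometric step for cycles. You only establish that each cycle has \emph{at least one} vertex with no forced alignment, and then propose to ``close the cycle with the one free vertex absorbing the leftover turn.'' That cannot work: if all but one vertex of a cycle have angle $\pi$, the cycle degenerates into a single straight segment that must return to its starting point --- a closed polygon needs at least three non-straight corners. The fact you actually need, and which linearity does give you by pushing your own merging argument further, is that every cycle contains edges of at least \emph{three} distinct paths: a single simple path cannot cover a whole cycle, and two paths that together cover a cycle must share at least two vertices, contradicting linearity. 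Hence every cycle has at least three ``free'' vertices and can be drawn as a convex polygon with strictly convex angles exactly at the transitions between paths and angle $\pi$ elsewhere. This is precisely the paper's key observation and is what makes each cycle realizable. Your alternative route --- open each cycle at a free vertex, draw the resulting tree, then re-identify the two copies of the split vertex --- leaves exactly this realizability question (forcing the two copies to coincide in the plane) unresolved; you correctly flag it as the technical heart but do not supply the argument.

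A second, smaller problem: you claim the composed drawing is planar. That is false in general --- the paper exhibits linear path-based cactus supports whose planar curve complexity is $1$ (\cref{thm:cactusLinearPlanar}, \cref{fig:badLinear}); for instance, two cycles both aligned at a common cut vertex must cross in any zero-bend drawing, and your assertion that alignments at a cut vertex ``do not interact'' with the blocks glosses over exactly this. The theorem only asserts curve complexity $0$, and the paper's induction explicitly accepts crossings when the alignment at a cut vertex forces the two cycle-edges to opposite sides of an already-drawn cycle. Dropping the planarity claim and replacing ``one free vertex'' by ``at least three free vertices, hence a convex polygon'' would bring your block-by-block induction in line with the paper's proof.
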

\begin{proof}%
	Let $G$ be a linear path-based cactus support.
	Consider bridges as cycles of length two. We show by induction on the number of cycles the following slightly stronger statement:
	there is a (not necessarily planar) drawing of~$G$ in which all cycles are drawn convex and two incident edges are aligned if and only if they are contained in the same path.
	A single edge can clearly be drawn. Since we deal with a support of a linear hypergraph, every cycle that is not a bridge 
	contains edges of at least three different paths. Thus, if $G$ consists of a single cycle, we can draw it as a convex polygon with strictly convex angles at the vertices where two adjacent edges are contained in distinct paths and an angle of $\pi$ otherwise. 
	
	If $G$ has several cycles, let $C$ be a cycle of $G$ that contains exactly one cut vertex $v$.
	By the inductive hypothesis, $G-C+v$ has a drawing with the desired properties.
	Assume first that $C$ is a single edge $e$. If there is a path that contains $e$ and an edge $e'$ incident to $e$, we append $e$ to $v$ such that $e$ aligns with $e'$. This is possible, since the hypergraph is linear and, thus, $e'$ is not contained in any other path and, hence, is not aligned with any other edge.
	
	If $C$ is a real cycle let $e_1$ and $e_2$ be the two edges of $C$ incident to $v$. Again we align $e_1$ and $e_2$ with the respective edges of $G-C+v$  or with each other if there is a path containing $e_1$ or $e_2$ and the respective other edge.
	Analogously to the base case, we can close the cycle. 
	However, the alignment of $e_1$ and $e_2$ might imply that $e_1$ and $e_2$ have to go to different sides of a cycle that was already drawn. This will cause crossings when closing~the~cycle~$C$. \qed
\end{proof}

\begin{figure}[h]
	\centering
	\begin{minipage}{0.6\linewidth}
		\centering
		\includegraphics[page=1]{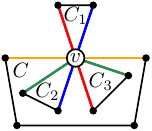}\hfil
		\includegraphics[page=2]{img/conflictG.pdf}\\[1ex]
            \includegraphics[page=3]{img/conflictG.pdf}
		\subcaption{planar curve complexity 1, curve complexity 0\label{fig:cactusPlanarCCgreater0}}
	\end{minipage}  \hfil
	\begin{minipage}{0.35\linewidth}
		\centering
            \includegraphics[page=4]{img/conflictG.pdf}\\[1ex]
		\includegraphics[page=5]{img/conflictG.pdf}
		\subcaption{planar curve complexity 0\label{fig:cactusPlanarCC0}}
	\end{minipage}
	\caption{Linear path-based cactus supports. Colors (except black) indicate paths. }
	\label{fig:badLinear}
\end{figure}

In a path-based support, we say that a cycle is \emph{aligned at a vertex $v$} if its two edges incident to~$v$ belong to the same path. For vertex $v$ of a path-based cactus support, consider the  \emph{constraint graph} $\mathcal H(v)$ that
has a node for each cycle containing $v$ and
has an edge between two nodes representing cycles $C$ and $C'$
if there is a path containing an edge in both~$C$ and~$C'$.
E.g., in \cref{fig:cactusPlanarCCgreater0}, $\mathcal H(v)$ contains the triangle $\langle C_1,C_2,C_3\rangle$, while in \cref{fig:cactusPlanarCC0} it contains a 4-cycle.

\begin{theorem}
	\label{thm:cactusLinearPlanar}
	The planar curve complexity of 
	a linear path-based cactus support $(G = (V, E), \p)$ is 0 if and only if for any vertex $v \in V$ (i)~at most one cycle of~$G$ is aligned at $v$ and (ii)~if a cycle
	is aligned at~$v$, then the constraint graph $\mathcal H(v)$ is bipartite.
	This can be determined in linear time.
\end{theorem}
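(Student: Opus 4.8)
The plan is to prove both implications of the equivalence and then bound the running time. Two structural facts about a linear path-based cactus support $(G,\p)$ will be used throughout. First, since any two paths share at most one vertex, no two paths share an edge (that would be two shared endpoints), so for a vertex $v$ the set of pairs of edges at $v$ that a path forces to be collinear forms a \emph{partial matching} $M_v$ on the edges at $v$; moreover, because a path that uses an edge of a cycle $C$ can neither leave nor re-enter $C$ except along edges of $C$ (cactus structure), the edges of $\mathcal H(v)$ are exactly the pairs $\{C,C'\}$ of cycles through $v$ for which $M_v$ matches an edge of $C$ at $v$ to an edge of $C'$ at $v$, while ``$C$ is aligned at $v$'' means $M_v$ matches the two edges of $C$ at $v$ to each other. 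Second, in that aligned case those two edges are consumed by a single path, so by linearity $C$ is an \emph{isolated} node of $\mathcal H(v)$, and in any curve-complexity-$0$ drawing $C$ has interior angle exactly $\pi$ at $v$: its two edges emanate along a common line $\ell$, and (by convexity of the polygon and non-degeneracy) the rest of $C$ together with its interior lies in one open half-plane $A$ bounded by $\ell$ near $v$. Finally, because the paths partition the edges of any cycle into arcs, and only two arcs (spanning the whole cycle and sharing both endpoints) would violate linearity, every cycle has at least three vertices where its two incident edges lie on distinct paths and hence may be drawn with a strictly convex angle.

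For \emph{necessity}, fix a planar curve-complexity-$0$ drawing and suppose a cycle $C_0$ is aligned at $v$, with line $\ell$ and half-planes $A,\bar A$. Any other cycle $C$ through $v$ meets $C_0$ only in $v$, so $C\setminus v$ lies entirely inside or entirely outside the Jordan curve $C_0$; since the interior of the convex polygon $C_0$ coincides with $A$ near $v$, a short initial segment of each edge of $C$ at $v$ lies on the same side ($A$ if $C\setminus v$ is inside, $\bar A$ otherwise), so both edges of $C$ at $v$ point into the corresponding half-plane. If a second cycle were aligned at $v$ too, its flat line $\ell'$ cannot equal $\ell$ (otherwise four edges would lie on one line in only two available directions, forcing an overlap), so $\ell'$ crosses $\ell$ transversally at $v$ and one of its two antipodal edges points into $A$ and the other into $\bar A$; then that cycle minus $v$ would be connected, disjoint from $C_0$, yet have points both inside and outside $C_0$ --- impossible. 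This gives (i). For (ii), colour each cycle $C\neq C_0$ at $v$ by whether $C\setminus v$ is inside or outside $C_0$; its wedge at $v$ (angle $<\pi$, spanned by two directions that both lie in $A$ or both in $\bar A$) is contained in the corresponding half-plane, and a path realizing an edge of $\mathcal H(v)$ runs straight through $v$ along a line through $v$, using one edge pointing into $A$ and the antipodal one into $\bar A$, so its two cycles receive opposite colours. Thus this is a proper $2$-colouring of $\mathcal H(v)-C_0$, and since $C_0$ is isolated, $\mathcal H(v)$ is bipartite.

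For \emph{sufficiency}, assume (i) and (ii) hold at every vertex; we build a planar curve-complexity-$0$ drawing by choosing, at each vertex, a rotation together with the flat pairs $M_v$ that is realizable, and then laying out the cactus block by block. At a vertex $v$ with a (necessarily unique) aligned cycle $C_0$, note that a matched pair between two cycles $C,C'$ at $v$ is antipodal through $v$, so $C$ and $C'$ cannot lie on the same side of $\ell$; hence a proper $2$-colouring of $\mathcal H(v)$ --- which exists precisely because $\mathcal H(v)$ is bipartite, with the isolated $C_0$ assigned arbitrarily --- tells us to place one colour class of cycles inside $C_0$ and the other outside, arranging the cycles within each side side-by-side (there are no matched pairs between equally coloured cycles) and placing the edges matched to bridges antipodally on the opposite side; with total freedom of angles this rotation can be realized with every pair of $M_v$ at angle $\pi$ and each cycle's interior occupying the empty wedge between its two edges. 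At a vertex with no aligned cycle, no cycle is pierced by a matched pair, so the matched pairs are just lines through $v$ and the cycles can be arranged around $v$ compatibly (for instance, in the extreme case where $\mathcal H(v)$ is an odd cycle on three cycles, a regular-hexagon rotation of their six edges already works). Since each cycle has at least three vertices where it may be drawn with a strictly convex angle, it can be realized as a non-degenerate convex polygon with the prescribed flat angles, and subtrees and nested cycles are inserted into the faces at their attachment vertices along the directions fixed by $M_v$; as alignments of tree edges are freely realizable, this yields a planar straight-line drawing in which a path bends at a vertex only if that vertex is not a matched pair of the path --- i.e., never.

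For the running time, compute the block decomposition of $G$ in linear time; then, scanning every path and each of its internal vertices in total time $\oh(\|\p\|)$, record at each vertex which cycle (if any) is aligned there and assemble $\mathcal H(v)$ (nodes: the cycle-blocks at $v$; edges: matched pairs of $M_v$ joining two distinct cycle-blocks). Checking at each $v$ that at most one cycle is aligned and, if one is, that $\mathcal H(v)$ is bipartite (one BFS) costs $\oh(\deg(v) + |\mathcal H(v)|)$, for a total of $\oh(|V|+|E|+\|\p\|)$. The main obstacle is the sufficiency direction: one must verify that the local $2$-colourings forced by (ii) at the pierced vertices, together with the free arrangements elsewhere, can always be turned into a single planar embedding whose geometric realization satisfies all flat-pair requirements of $M_v$ simultaneously while keeping every cycle convex; the necessity argument and the running-time analysis are comparatively routine.
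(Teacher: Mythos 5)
Your necessity argument and the linear-time test are sound and essentially coincide with the paper's (the paper compresses necessity into three sentences; your explicit observations that a path witnessing an edge of $\mathcal H(v)$ must use one edge of each of the two cycles incident to $v$, and that an aligned cycle is isolated in $\mathcal H(v)$ by linearity, are exactly the facts needed). The genuine gap is the sufficiency direction, and your closing sentence concedes it: you have not shown that the local rotations and flat pairs can be realized simultaneously by a single planar straight-line drawing. This is not a routine verification~--- it is where the paper's proof does all of its work, namely an induction on the number of cut vertices: repeatedly pick a cut vertex $v$ all but one of whose components are single cycles or edges, draw the large component $G_0$ first, and insert the remaining components one by one along the maximal paths and cycles of $\mathcal H(v)$, maintaining the invariant that at most one already-drawn component has ``unfinished'' paths and that its opposite wedge is still empty in a vicinity of $v$; odd cycles of $\mathcal H(v)$ (which can occur only when no cycle is aligned at $v$) are closed by enclosing and scaling down previously drawn components (\cref{fig:oddAv}).

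Moreover, your one concrete claim about the hard case is false. For a triangle $C_1,C_2,C_3$ in $\mathcal H(v)$ with antipodal pairs $(e_1^1,e_2^1)$, $(e_2^2,e_3^1)$, $(e_3^2,e_1^2)$, the regular-hexagon rotation $e_1^1,e_2^2,e_3^2,e_2^1,e_3^1,e_1^2$ at $0^\circ,60^\circ,\dots,300^\circ$ places the two edges of $C_2$ (at $60^\circ$ and $180^\circ$) so that they interleave with the two edges of $C_3$ (at $120^\circ$ and $240^\circ$) in the cyclic order around $v$; two cycles of a cactus that meet only in $v$ and whose edges interleave at $v$ must cross elsewhere, so this rotation admits no planar realization. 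A suitable rotation does exist (e.g., give $C_1$ a thin wedge near $0^\circ$, $C_2$ a thin wedge near $180^\circ$, and let $C_3$ occupy the remaining sector so that the three antipodal constraints hold with pairwise disjoint wedges), but producing such a rotation in general~--- in the presence of the edge directions already fixed by the previously drawn component $G_0$, of tree edges matched into $M_v$, and of arbitrarily many path- and cycle-components of $\mathcal H(v)$~--- is precisely the content of the missing argument. Until that construction is supplied, the ``if'' direction of the theorem is unproven.
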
    

\begin{figure}[t]
	\centering
	\begin{minipage}[t]{0.3\textwidth}
		\centering
		\includegraphics[scale=0.9,page=4]{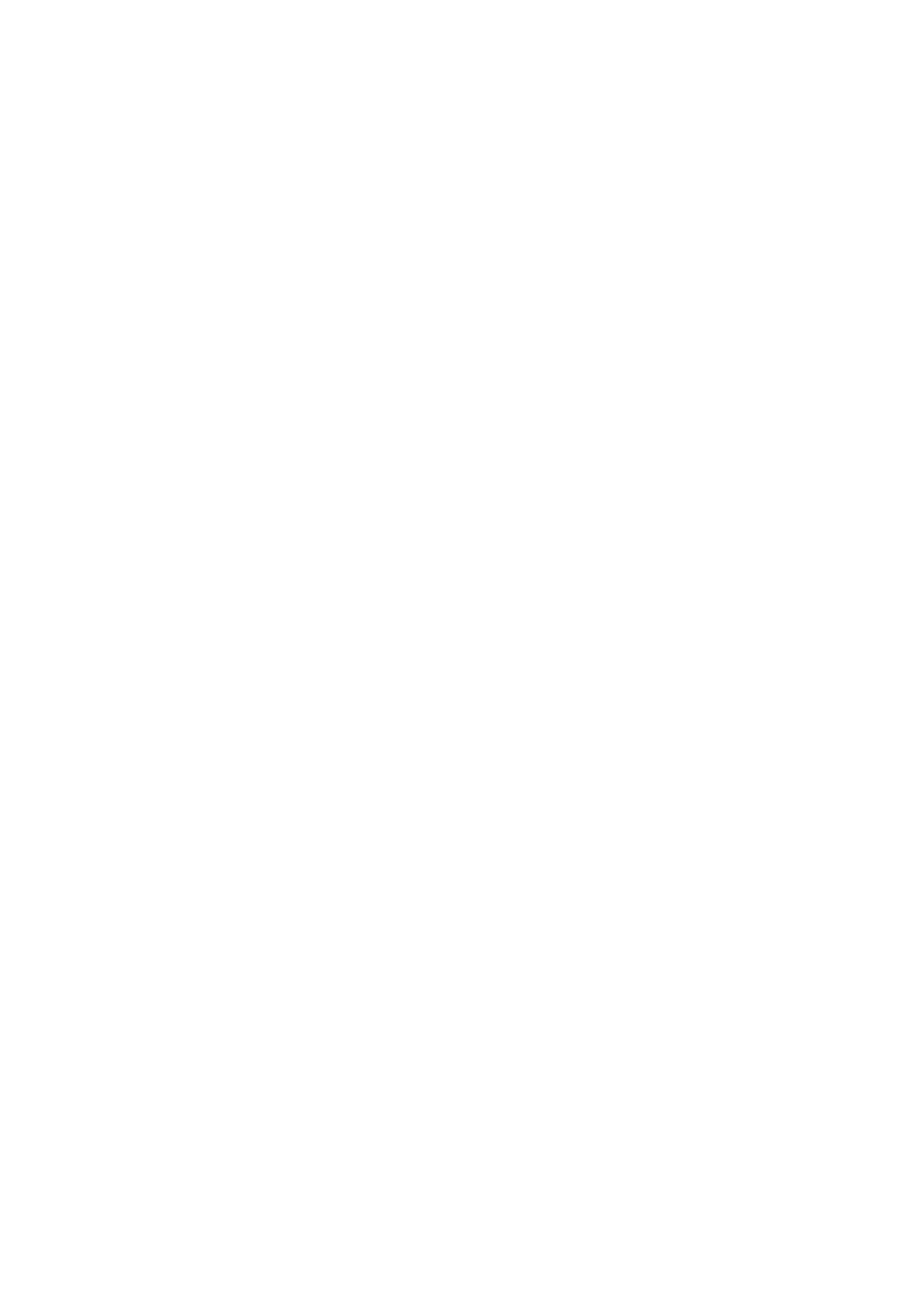}\hfil
		\subcaption{\label{fig:evenAv}even cycle in $\mathcal H(v)$}
	\end{minipage}
	\hfill
	\begin{minipage}[t]{0.32\textwidth}
		\centering
		\includegraphics[scale=0.9,page=5]{img/cactusLinear.pdf}\hfil
		\subcaption{\label{fig:2pathsAv} path in $\mathcal H(v)$ via cycle in~$G_0$}
	\end{minipage}
	\hfill
	\begin{minipage}[t]{0.3\textwidth}
		\centering
		\includegraphics[scale=0.9,page=7]{img/cactusLinear.pdf}\hfil
		\subcaption{\label{fig:oddAv} odd cycle in $\mathcal H(v)$}
	\end{minipage}

	\caption{Constructing planar drawings with no bending paths of linear path-based cactus~supports.}
	\label{fig:cactus-construction}
\end{figure}

\begin{proof}%
	
	Two cycles that are both aligned at the same vertex
	must cross in a drawing without bends. If a cycle $C$ is aligned at a vertex~$v$, then the cycles of any path in $\mathcal H(v)$ must alternate between the interior and the exterior of $C$ in a planar drawing without bending~paths, implying that $\mathcal H(v)$ is bipartite. 
	Assume now that~$G$ is a linear path-based cactus support such that     
	Conditions (i)--(ii) are fulfilled for all vertices  of~$G$.  
	
	By induction on the number of cut vertices, we show that there is a planar drawing in which all cycles are drawn convex and two incident edges are aligned if and only if they are contained in the same path. 
	If $G$ consists of a single edge, it can be drawn in an arbitrary way. Since the hypergraph is linear, every cycle contains edges of at least three paths and, thus, can be drawn with three bends and with the desired properties.
	
	Now assume there are cut vertices.
	A \emph{component} of a cut vertex $v$ is the subgraph of~$G$ induced by a connected component of $G-v$ together with $v$.
	Choose a cut vertex $v$ such that all but at most one component $G_0$ of $v$ consist of a simple cycle or an edge.  By the inductive hypothesis, $G_0$ has a planar drawing without bends. 
	Let $C$ be the cycle (or bridge) of $G_0$ incident to $v$. We add the other components $G_1,\dots,G_\ell$
	in a small area around $v$; see \cref{fig:cactus-construction}. 
	
	If $C$ is not aligned at $v$, $G_1,\dots,G_\ell$ are added in the exterior of $C$. If one among $G_1,\dots,G_\ell$, say $G_1$, is aligned at $v$, we start with $G_1$.
	Then, by Condition~(i), cycle~$C$ cannot be aligned at $v$ and, thus, by our induction invariant, the exterior angle at the two edges of~$C$ being incident to $v$ is greater than $\pi$.
	Thus, we can draw $G_1$ in the exterior of $C$ in the desired way. 
	
	For a component $G_i$, $i=0,\dots,\ell$, let $e_1(G_i)$ and $e_2(G_i)$ be the (at most) two edges of $G_i$ incident to~$v$.
	Assume we have already drawn $G_i$.
	Then the lines through $e_1(G_i)$ and $e_2(G_i)$ divide the plane into four wedges $W_1,\dots,W_4$ around $v$, one of which, say $W_1$, contains $G_i$ (or $C$ if $i=0$).
	The opposite wedge of $G_i$ is then~$W_3$. E.g., the gray shaded area in \cref{fig:cactus-construction} shows the wedge between the lines through $e_1(G_i)$ and $e_2(G_i)$ that contains $C$ and the opposite wedge of $G_0$.
	Starting from $C$, we process $G_1,\dots,G_\ell$ in the order in which they appear on maximal paths in $\mathcal H(v)$. In the beginning and each time we have processed a maximal path, 
	we keep the invariant that at most one of the already drawn components has unfinished paths, i.e., contains edges that are contained in paths that also share an edge with a not yet drawn component.
	Moreover, if such a component $G'$ exists, then its opposite wedge is empty in some small vicinity around~$v$.
	If no component with unfinished paths exists, we add an arbitrary component in the exterior of any other component such that the opposite wedge is empty in some vicinity of $v$.
	This is always possible if we make the angle between the up-to-two edges incident to $v$ very small.
	So assume now there is a drawn component $G_i$ with unfinished paths.
	Starting from $G_i$ and following a maximal path $P$ in $\mathcal H(v)$, we keep drawing components into a small region next to the opposite wedge of $G_i$ and a small region next to the edge $e_1$ of $G_i$ that is contained in a path sharing an edge with the next component in $P$.
	The path $P$ ends if we either reach a leaf in $\mathcal H(v)$ or again component~$G_i$.
	\cref{fig:evenAv} illustrates the second case.
	In this case we close the last cycle using also the opposite wedge of $G_i$.
	In the case that we traversed an odd length cycle in $\mathcal H(v)$, we make sure that we do not enclose the component $G_0$ (by choosing the direction in which we traverse the cycle in $\mathcal H(v)$) and scale down the already drawn components that we enclose; see \cref{fig:oddAv}.
	Observe that this only happens if no cycle is aligned at $v$.
	If we ended at a leaf, and $G_i$ has still unfinished paths then let $G_j$ be the neighbor of $G_i$ that has not yet been drawn.
	Repeat this procedure, proceeding along a maximal path in $\mathcal H(v)$ starting from $G_i,G_j$; see~\cref{fig:2pathsAv}. \qed
\end{proof}

\begin{corollary}\label{cor:cactusLinearPlanar1}
	The planar curve complexity of a linear path-based support is 0 if the underlying support graph is a tree and at most 1 if the underlying support graph is a cactus.
\end{corollary}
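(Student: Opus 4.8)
The tree case is immediate from \cref{thm:cactusLinearPlanar}: a tree is a linear path-based cactus support with no cycles, so conditions~(i) and~(ii) are vacuously satisfied at every vertex, and hence its planar curve complexity is~$0$. (Equivalently, by the remarks in \cref{sec:preliminaries} the planar curve complexity of a tree equals its curve complexity, which is~$0$ by \cref{thm:cactusLinear}.)

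For a linear path-based cactus support $(G,\p)$ I want a planar straight-line drawing in which every path bends at most once. The plan is to rerun the inductive construction from the proof of \cref{thm:cactusLinearPlanar} — processing cut vertices one at a time and keeping all cycles convex — but to permit a single bend whenever a vertex~$v$ violates condition~(i) or~(ii). At such a~$v$ the obstruction is always a forced collinearity: either at least two cycles of~$G$ through~$v$ are aligned at~$v$, so each of them would have to be drawn flat at~$v$, and (as already argued in the proof of \cref{thm:cactusLinearPlanar}) two cycles flat at the same vertex must cross; or exactly one cycle~$C$ through~$v$ is aligned at~$v$ while $\mathcal{H}(v)$ contains an odd cycle, forcing an inside/outside parity conflict around~$C$. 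In the first case I repair~$v$ by \emph{un-flattening} all but one of the cycles aligned at~$v$: for each such cycle I let the unique path carrying its two edges at~$v$ bend at~$v$, which frees a wedge around~$v$ into which that cycle can be routed. In the second case I either un-flatten~$C$ as well — so that no cycle is aligned at~$v$ and the odd cycle of $\mathcal{H}(v)$ becomes harmless exactly as in \cref{thm:cactusLinearPlanar} — or instead let a single connector path lying on an odd cycle of $\mathcal{H}(v)$ bend at~$v$, breaking that parity constraint. Either way, once the offending collinearity is removed the remainder of the incremental construction goes through unchanged and planar.

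The step that needs care — and the main obstacle — is the \emph{charging}: the bends introduced in these repairs must be distributed so that no path is bent at two different vertices. A path~$P$ can be a bending candidate only at a vertex~$v$ that is interior to~$P$ and at which both edges of~$P$ at~$v$ lie on one cycle of~$G$; moreover at every bad vertex there is always more than one admissible choice of which path to bend (all but one of the carrier paths of the cycles aligned at~$v$, or any single connector path on an odd cycle of $\mathcal{H}(v)$). I would phrase this as a selection problem on the block--cut tree of~$G$: each bad vertex demands that one path from a prescribed nonempty set of paths be bent at it, and I would process the block--cut tree top-down, at each bad vertex picking a path that has not been bent yet and that continues into the not-yet-drawn subtree. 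To show this greedy choice never gets stuck I would use the cactus structure — two distinct cycles share at most one vertex, and a simple path meets each cycle in a single arc — which limits how the demand sets at different bad vertices can overlap on a common path, so that a system of distinct representatives always exists. Pinning down exactly which paths are admissible at each bad vertex, and verifying that the greedy assignment succeeds, is where I expect the real effort to go; all the geometric work is inherited from \cref{thm:cactusLinearPlanar}. Finally, \cref{fig:cactusPlanarCCgreater0} shows that the bound~$1$ cannot be improved in general, since some linear path-based cactus supports have planar curve complexity exactly~$1$.
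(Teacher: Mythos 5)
The tree case is fine and matches the paper. For the cactus case, however, your argument has a genuine gap at exactly the point you flag yourself: the claim that the bends introduced by your local repairs can always be assigned so that no path is bent at two different vertices. You reduce this to the existence of a system of distinct representatives for the demand sets of the bad vertices and propose a top-down greedy on the block--cut tree, but you neither pin down the admissible sets nor prove the greedy (or any SDR) succeeds; a path can simultaneously be the unique carrier of a cycle aligned at one cut vertex and a forced repair candidate at another, and nothing in your sketch rules out that all of its candidate roles are forced. Since this charging step is the entire content of the ``at most~1'' bound, the proof is incomplete as written.

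The paper avoids the selection problem altogether. It reruns the incremental construction of \cref{thm:cactusLinearPlanar} and, when attaching the components $G_1,\dots,G_\ell$ at the current cut vertex~$v$, simply splits \emph{every} path at~$v$ except those containing an edge of~$G_0$. A simple path has at most two edges at~$v$, so a path containing an edge of~$G_0$ cannot also carry both edges of a newly attached cycle at~$v$; hence after the split no cycle of $G_1,\dots,G_\ell$ is aligned at~$v$ and the constraint graph $\mathcal H(v)$ becomes a star, so conditions~(i) and~(ii) of \cref{thm:cactusLinearPlanar} hold without any case analysis. The one-bend bound then follows because a path that gets split at~$v$ contains no edge of~$G_0$ and therefore lies entirely in the simple-cycle or bridge components attached at~$v$: it is encountered, and split, exactly once. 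If you want to salvage your more economical repair strategy (which would bend fewer paths), you would need to actually prove the SDR claim; as it stands, the wholesale-splitting argument is the one that closes the proof.
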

\begin{proof}$ $
	The conditions in \cref{thm:cactusLinearPlanar}  are trivially fulfilled for a tree, since there are no cycles. 
	For a cactus support, follow the construction in the proof of \cref{thm:cactusLinearPlanar}.
	When inserting new components, split all paths at $v$ unless the path contains an edge of $G_0$. This way the conditions in \cref{thm:cactusLinearPlanar} are fulfilled. Observe that we only split paths when we first encounter them. Since the splits correspond to the bends in the drawing, it follows that each path bends at most once. \qed
\end{proof}

\begin{corollary}\label{cor:cactusLinearPlanar2}
	Given a linear path-based cactus support $(G,\p)$, the planar curve complexity of $(G,\p)$ can be determined in linear time and a drawing of $G$ minimizing the sum of the number of bends in all paths in $\p$ can also be constructed in linear time.
\end{corollary}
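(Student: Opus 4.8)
The plan is to derive both statements from \cref{thm:cactusLinearPlanar} and \cref{cor:cactusLinearPlanar1}. By \cref{cor:cactusLinearPlanar1} the planar curve complexity of a linear path-based cactus support is $0$ or $1$, and by \cref{thm:cactusLinearPlanar} one decides in linear time which of the two cases occurs; in the first case the construction in the proof of \cref{thm:cactusLinearPlanar} even returns a bend-free planar drawing. So the planar curve complexity is determined in linear time, and for the drawing it remains to treat the case in which some vertex violates conditions~(i)--(ii) of \cref{thm:cactusLinearPlanar}. (For arbitrary, not necessarily planar, drawings the minimum is $0$ and is attained by the construction behind \cref{thm:cactusLinear}.)

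First I would reduce the total bend count to a sum of purely local quantities. Since the support is linear, no two paths of $\p$ share an edge, so at each vertex $v$ the edge pairs that are consecutive along a common path form a partial matching of the edges incident to $v$; hence aligning all of them is conflict-free at $v$ in isolation, and every bend in a planar drawing is caused by an embedding conflict among the cycles through $v$, exactly as captured by conditions~(i)--(ii). Write $c(v)$ for the number of cycles of $G$ that are aligned at $v$. I claim that the minimum number of paths bending at $v$, over all planar drawings of $(G,\p)$, equals $\mathrm{lb}(v):=c(v)$ if $\mathcal H(v)$ is not bipartite and $\mathrm{lb}(v):=\max(0,c(v)-1)$ otherwise. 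The lower bound is local: at most one cycle through $v$ can be drawn with its two $v$-edges aligned (two such would cross at $v$), so at least $c(v)-1$ of the $c(v)$ aligning paths must bend at $v$; and if one cycle stays aligned, the cycles linked to it in $\mathcal H(v)$ must alternate between its two sides, so a non-bipartite $\mathcal H(v)$ forces one further bend, on a path distinct from those $c(v)-1$ aligning paths because in a cactus a path that aligns a cycle at $v$ contains no edge of any other cycle through $v$. Moreover, letting a path bend at $v$ (i.e., splitting it there) leaves $\mathcal H(w)$ and $c(w)$ unchanged for every $w\neq v$ -- it can only delete edges from $\mathcal H(w)$ -- so a single bend relieves the conflict at at most one vertex. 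Summing over all vertices, every planar drawing of $(G,\p)$ has at least $\sum_v \mathrm{lb}(v)$ bends.

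For the matching upper bound, at each vertex $v$ I would split a set $S_v$ of exactly $\mathrm{lb}(v)$ paths at $v$: if $\mathcal H(v)$ is bipartite, keep one aligned cycle and split the paths aligning the remaining $c(v)-1$ cycles; if $\mathcal H(v)$ is not bipartite, split the paths aligning all $c(v)$ cycles. After all these splits, every vertex satisfies conditions~(i)--(ii) (the splits at $v$ fix $v$ and, by the locality above, spoil nothing elsewhere), and the support is still a linear path-based cactus support, so by \cref{thm:cactusLinearPlanar} it admits a bend-free planar drawing. This is a planar drawing of $(G,\p)$ in which each original path bends precisely at its split points, for a total of $\sum_v |S_v|=\sum_v \mathrm{lb}(v)$ bends, matching the lower bound. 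Every step runs in linear time: the values $c(v)$, the graphs $\mathcal H(v)$, and their bipartiteness are computed for all $v$ by one pass over $\p$ together with the block--cut structure of the cactus (each path contributes at most one edge to $\mathcal H(v)$ per internal vertex, so $\sum_v |E(\mathcal H(v))|\le \|\p\|\in\oh(|V|+|E|)$), choosing the sets $S_v$ is linear, and the drawing construction of \cref{thm:cactusLinearPlanar} is linear as well.

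The step I expect to be the main obstacle is establishing the formula for $\mathrm{lb}(v)$ -- in particular, that the ``at most one aligned cycle'' obstruction and the $\mathcal H(v)$-bipartiteness obstruction are genuinely additive and cannot be resolved more cheaply jointly, and that splitting a path at $v$ never introduces a conflict at another vertex. Both facts rest on the cactus property that two distinct cycles share at most one vertex together with the edge-disjointness of paths that linearity provides, so the technical core of the write-up is the careful local case analysis behind $\mathrm{lb}(v)$ and the verification that the support obtained after all splits is indeed drawable without bends in the plane.
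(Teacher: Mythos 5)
Your proposal is correct and follows essentially the same route as the paper: decide the planar curve complexity via \cref{cor:cactusLinearPlanar1} and \cref{thm:cactusLinearPlanar}, and at each vertex $v$ split all but one (bipartite $\mathcal H(v)$) or all (non-bipartite $\mathcal H(v)$) of the paths aligning a cycle at $v$, then realize the split instance bend-free by \cref{thm:cactusLinearPlanar}. The paper merely asserts that this splits ``as few times as possible''; your explicit per-vertex lower bound $\mathrm{lb}(v)$ and the additivity/locality argument supply the optimality justification the paper leaves implicit, and they are sound.
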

\begin{proof}
	The planar curve complexity of $(G,\p)$ is at most 1 and it can be determined in linear time whether it is 0.
	In order to construct a drawing of $G$ minimizing the total number of bends, we do the following. Let $v$ be a vertex. Let $\p_v$ be the set of paths in $\p$ that contain $v$ as an internal vertex. Let  $\p'_v$ be the set of paths $p \in \p_v$ such that the two edges of $p$ incident to $v$ are contained in one cycle of $G$. 
	In order to achieve a planar drawing without bending paths,
	we split paths as few times as possible so that the conditions in \cref{thm:cactusLinearPlanar} are fulfilled. I.e., we split all but one path in $\p'_v$ at $v$ if $\mathcal H_v$  contains no odd cycles and otherwise we split all paths in $\p'_v$ at $v$.
	Each split corresponds to a bend. \qed
\end{proof}

\cactusGeneral*
\label{thm:cactusGeneral-0*}
\begin{proof}
    It remains to prove the running time of our algorithm.
    Note that, since $G$ is planar, $|E| \in \oh(|V|)$.
    
    First, we check if at some vertex there are two paths
    that split, i.e., they share one edge~$e$ incident to~$v$
    but they continue in different edges~$e_1$ and~$e_2$ incident to~$v$.
    Essentially, we need to verify that, at every vertex,
    all incident edges can be matched into pairs
    that have precisely the same set of paths on them
    when we ignore the paths that end in~$v$.
    (Note that~$v$ can have odd degree if there
    is an edge incident to~$v$ containing only
    paths ending in~$v$; such an edge will be ignored.)
    To this end, we use, at every vertex~$v$, a union-find data structure
    (augmented by counters) that groups the set of paths
    containing the same edge incident to $v$ (but not ending at $v$)
    into one set.
    In detail, we proceed as follows.
    
    To ignore the paths that end in~$v$,
    we think of each edge $uv$ as two half-edges;
    one incident to~$u$ and the other incident to~$v$.
    We start by storing, for each half-edge~$e$, a list $\p(e)$ that contains
    all paths that traverse~$e$ and do not end in the incident vertex of~$e$.
    Further, we store for each vertex~$v$,
    a list $\p(v)$ that contains all paths having~$v$ as an inner vertex.
    These lists can be created in overall
    $\oh(\|\mathcal P\|)$ time.
    Then, for each vertex~$v$, we create a union-find data structure
    over the paths in $\p(v)$
    with the additional information that we store for
    each set how often we have accessed it.
    Initially each set contains a single path as one object and the access counter is zero.
    For each half-edge~$e$ incident to~$v$,
    we iterate over the paths in~$\p(e)$.
    If $\p(e)=\emptyset$, we skip~$e$.
    Otherwise, let $P_1$ be the first path in~$\p(e)$.
    Find $P_1$ in the union-find data structure
    and let its set be~$S$.
    
    Consider first the case that the access counter of~$S$ is~0. Then $P_1$ is the only path in $S$ and $P_1$ has not yet been considered.
    We increase the access counter of $S$ to~1, and
    we check for each path~$P \in \p(e) \setminus \{P_1\}$
    if the  access counter of the set $S'$ containing $P$ is~$0$, too.
    If yes, we make a union operation on $S'$ and $S$; we call the new set again $S$ with counter one.
    If no, then $P$ has appeared in a half-edge incident to~$v$
    where $P_1$ has not appeared; we abort and return false.
    If we do not abort, then in the end $S$ contains exactly the paths of $\p(e)$ and $S$ will not be subject to a union operation later.
    We say that $S$ \emph{originates} from~$e$.
    
    Consider now the case that the access counter of~$S$ was already~1.
    Then $S$ is the set $\p(e')$ of paths containing some edge $e'$ that was considered before $e$.
    We increase the access counter to~2, and
    we find for each path~$P \in \p(e) \setminus \{P_1\}$
    its set~$S'$ in the union-find data structure.
    If $S' \ne S$, then $P$ has not appeared
    in a half-edge where~$P_1$ has appeared; we abort and return false.
    If we do not abort, then $S$ contains all paths of $\p(e)$.
    
    Finally, consider the case that the access counter of~$S$ was already~2.
    Then there is a path~$P$ in~$S$ that has appeared
    in two half-edges that we have seen before.
    while $P_1$ has appeared in at most one of these half-edges;
    we abort and return false.

    If we do not abort, then in the end each set in the union-find data structure originates for some half-edge~$e'$ incident to $v$.
    Consider the set of half-edges incident to $v$ where no set in the union-find data structure originates from.
    For such a half-edge~$e$ with $\p(e) \ne \emptyset$, there is a half-edge $e'$ that was processed before $e$
    such that $\p(e) \subseteq \p(e')$ and $\p(e')$ is one of the sets of paths in the union-find data structure
    (i.e., the set that originates from~$e'$).
    Assume now that $\emptyset \neq \p(e) \subsetneq \p(e')$, let $P \in \p(e') \setminus \p(e)$, and let $e'' \neq e$ be a half-edge in $P$ incident to $v$.
    If $e''$ was processed before $e$, then the counter of the set $S'$ containing $\p(e')$ was 2 at the time when we processed~$e$.
    Thus, the algorithm had returned false.
    Otherwise, the counter of $S$ was two at the time when we processed $e''$ and the algorithm would have returned false at this point.
    Summarizing, we get that, if the algorithm does not return false, then for two edges $e$ and $e'$ incident to $v$ either $\p(e)=\p(e')$ or $\p(e) \cap \p(e') =\emptyset$, which implies that no two paths split.
    
    Concerning the runtime of this first step: If $v$ appears in $k_v$ paths as an inner vertex, then we have $2k_v$ find operations and at most $k_v-1 \leq |\p| - 1$ union operations.
    Thus, the run time  for $v$ is in $\mathcal \oh(k_v \cdot \alpha(|\p|))$, where $\alpha$ is the inverse Ackermann function.
    Summing up over all vertices, the total run time for this first step is in $\mathcal \oh(\sum_{v \in V} k_v \cdot \alpha(|\p|)) \subseteq \mathcal \oh(\|\mathcal P\| \cdot \alpha(|\p|))$.
    
    Second, we merge all paths that share an edge into one path.
    To this end, we initialize a union-find data structure over all paths in~$\p$.
    For each edge~$e$, we iterate over all paths traversing~$e$.
    Note that we can create corresponding lists for all edges in overall $\oh(\|\p\|)$ time.
    Let $P_1$ be the first path in the list of~$e$.
    We find the set~$S$ of~$P_1$ in the union-find data structure.
    For each following path~$P$ (in the list of the same edge),
    we find its set~$S'$ and,
    if $S' \ne S$, we apply the union operation to~$S$ and~$S'$.
    In the end, each set of the union-find data structure
    corresponds to a ``new'' path.
    We have $\oh(|V|)$ new paths since every edge is contained in at most one path,
    and we have $\oh(\|\p\|)$ find and $\oh(|\p|)$ union operations.
    The total running time of this step is again in
    $\mathcal \oh(\|\p\| \cdot \alpha(|\p|))$.
    
    Third, we assure that each pair of ``new'' paths
    intersects in at most one vertex. 
    Any two paths that share two vertices do so on some cycle.
    Since no two paths share an edge, this is only possible if the two paths together cover the cycle.
    In order to check that, we label the edges by the path they are contained in.
    By the preprocessing, each edge is contained  in at most one path and, by our general assumption, each edge is contained in at least one path.
    Now we only have to check whether there is a cycle whose edges have exactly two distinct labels.
    This can be done in linear time. Recall that the input graph is a cactus and that the cycles are edge-disjoint. Moreover, all cycles can be found in linear time, e.g., by DFS. 
    
    Fourth, we apply the linear-time algorithm from \cref{thm:cactusLinearPlanar}.
    Summing up over all these steps,
    the total running time is in 
    $\oh(\|\p\| \cdot \alpha(|\p|))$, i.e., near-linear. \qed
\end{proof}

\subsection{Non-Linear Cactus Supports}

\cactusGeneral*
\label{thm:cactusGeneral-0*}
\begin{proof}
    It remains to prove the running time of our algorithm.
    Note that, since $G$ is planar, $|E| \in \oh(|V|)$.
    
    First, we check if at some vertex there are two paths
    that split, i.e., they share one edge~$e$ incident to~$v$
    but they continue in different edges~$e_1$ and~$e_2$ incident to~$v$.
    Essentially, we need to verify that, at every vertex,
    all incident edges can be matched into pairs
    that have precisely the same set of paths on them
    when we ignore the paths that end in~$v$.
    (Note that~$v$ can have odd degree if there
    is an edge incident to~$v$ containing only
    paths ending in~$v$; such an edge will be ignored.)
    To this end, we use, at every vertex~$v$, a union-find data structure
    (augmented by counters) that groups the set of paths
    containing the same edge incident to $v$ (but not ending at $v$)
    into one set.
    In detail, we proceed as follows.
    
    To ignore the paths that end in~$v$,
    we think of each edge $uv$ as two half-edges;
    one incident to~$u$ and the other incident to~$v$.
    We start by storing, for each half-edge~$e$, a list $\p(e)$ that contains
    all paths that traverse~$e$ and do not end in the incident vertex of~$e$.
    Further, we store for each vertex~$v$,
    a list $\p(v)$ that contains all paths having~$v$ as an inner vertex.
    These lists can be created in overall
    $\oh(\|\mathcal P\|)$ time.
    Then, for each vertex~$v$, we create a union-find data structure
    over the paths in $\p(v)$
    with the additional information that we store for
    each set how often we have accessed it.
    Initially each set contains a single path as one object and the access counter is zero.
    For each half-edge~$e$ incident to~$v$,
    we iterate over the paths in~$\p(e)$.
    If $\p(e)=\emptyset$, we skip~$e$.
    Otherwise, let $P_1$ be the first path in~$\p(e)$.
    Find $P_1$ in the union-find data structure
    and let its set be~$S$.
    
    Consider first the case that the access counter of~$S$ is~0. Then $P_1$ is the only path in $S$ and $P_1$ has not yet been considered.
    We increase the access counter of $S$ to~1, and
    we check for each path~$P \in \p(e) \setminus \{P_1\}$
    if the  access counter of the set $S'$ containing $P$ is~$0$, too.
    If yes, we make a union operation on $S'$ and $S$; we call the new set again $S$ with counter one.
    If no, then $P$ has appeared in a half-edge incident to~$v$
    where $P_1$ has not appeared; we abort and return false.
    If we do not abort, then in the end $S$ contains exactly the paths of $\p(e)$ and $S$ will not be subject to a union operation later.
    We say that $S$ \emph{originates} from~$e$.
    
    Consider now the case that the access counter of~$S$ was already~1.
    Then $S$ is the set $\p(e')$ of paths containing some edge $e'$ that was considered before $e$.
    We increase the access counter to~2, and
    we find for each path~$P \in \p(e) \setminus \{P_1\}$
    its set~$S'$ in the union-find data structure.
    If $S' \ne S$, then $P$ has not appeared
    in a half-edge where~$P_1$ has appeared; we abort and return false.
    If we do not abort, then $S$ contains all paths of $\p(e)$.
    
    Finally, consider the case that the access counter of~$S$ was already~2.
    Then there is a path~$P$ in~$S$ that has appeared
    in two half-edges that we have seen before.
    while $P_1$ has appeared in at most one of these half-edges;
    we abort and return false.

    If we do not abort, then in the end each set in the union-find data structure originates for some half-edge~$e'$ incident to $v$.
    Consider the set of half-edges incident to $v$ where no set in the union-find data structure originates from.
    For such a half-edge~$e$ with $\p(e) \ne \emptyset$, there is a half-edge $e'$ that was processed before $e$
    such that $\p(e) \subseteq \p(e')$ and $\p(e')$ is one of the sets of paths in the union-find data structure
    (i.e., the set that originates from~$e'$).
    Assume now that $\emptyset \neq \p(e) \subsetneq \p(e')$, let $P \in \p(e') \setminus \p(e)$, and let $e'' \neq e$ be a half-edge in $P$ incident to $v$.
    If $e''$ was processed before $e$, then the counter of the set $S'$ containing $\p(e')$ was 2 at the time when we processed~$e$.
    Thus, the algorithm had returned false.
    Otherwise, the counter of $S$ was two at the time when we processed $e''$ and the algorithm would have returned false at this point.
    Summarizing, we get that, if the algorithm does not return false, then for two edges $e$ and $e'$ incident to $v$ either $\p(e)=\p(e')$ or $\p(e) \cap \p(e') =\emptyset$, which implies that no two paths split.
    
    Concerning the runtime of this first step: If $v$ appears in $k_v$ paths as an inner vertex, then we have $2k_v$ find operations and at most $k_v-1 \leq |\p| - 1$ union operations.
    Thus, the run time  for $v$ is in $\mathcal \oh(k_v \cdot \alpha(|\p|))$, where $\alpha$ is the inverse Ackermann function.
    Summing up over all vertices, the total run time for this first step is in $\mathcal \oh(\sum_{v \in V} k_v \cdot \alpha(|\p|)) \subseteq \mathcal \oh(\|\mathcal P\| \cdot \alpha(|\p|))$.
    
    Second, we merge all paths that share an edge into one path.
    To this end, we initialize a union-find data structure over all paths in~$\p$.
    For each edge~$e$, we iterate over all paths traversing~$e$.
    Note that we can create corresponding lists for all edges in overall $\oh(\|\p\|)$ time.
    Let $P_1$ be the first path in the list of~$e$.
    We find the set~$S$ of~$P_1$ in the union-find data structure.
    For each following path~$P$ (in the list of the same edge),
    we find its set~$S'$ and,
    if $S' \ne S$, we apply the union operation to~$S$ and~$S'$.
    In the end, each set of the union-find data structure
    corresponds to a ``new'' path.
    We have $\oh(|V|)$ new paths since every edge is contained in at most one path,
    and we have $\oh(\|\p\|)$ find and $\oh(|\p|)$ union operations.
    The total running time of this step is again in
    $\mathcal \oh(\|\p\| \cdot \alpha(|\p|))$.
    
    Third, we assure that each pair of ``new'' paths
    intersects in at most one vertex. 
    Any two paths that share two vertices do so on some cycle.
    Since no two paths share an edge, this is only possible if the two paths together cover the cycle.
    In order to check that, we label the edges by the path they are contained in.
    By the preprocessing, each edge is contained  in at most one path and, by our general assumption, each edge is contained in at least one path.
    Now we only have to check whether there is a cycle whose edges have exactly two distinct labels.
    This can be done in linear time. Recall that the input graph is a cactus and that the cycles are edge-disjoint. Moreover, all cycles can be found in linear time, e.g., by DFS. 
    
    Fourth, we apply the linear-time algorithm from \cref{thm:cactusLinearPlanar}.
    Summing up over all these steps,
    the total running time is in 
    $\oh(\|\p\| \cdot \alpha(|\p|))$, i.e., near-linear. \qed
\end{proof}

\extensionToCacti*
    \label{thm:extensionToCacti*}

\begin{proof}
    We modify the dynamic programming approach from \cref{thm:treesGeneralFPT-pathspervertex} to allow for the input to be a cactus support.
    Root $G$ at an arbitrary vertex $r$.
    For each cycle $C$ of $G$, let $r_C$ be the vertex of $C$ that is closest to $r$.
    We call $r_C$ the root of $C$.
    Let $e_C$ be one of the two edges of $C$ incident to $r_C$ and let $t_C$ be the other end vertex of $e_C$.
    Let $G'$ be the tree obtained from $G$ by removing, for each cycle $C$, the edge $e_C$.
    We will execute (an adjusted version) of our dynamic program on~$G'$.
    The main differences are that we keep track of the number of bends
    in each cycle (every cycle needs at least three bends)
    and that we include the paths going over~$e_C$ when processing
    any vertex of~$C$.
    Note that if a vertex $v$ lies in multiple cycles,
    then it is the root of all of these cycles except for possibly one.
    If there exists such a cycle where $v$ is not the root, we denote it by~$C(v)$.

    When processing a vertex $v$ of $G'$,
    we consider as $G_v$ the subgraph of $G$
    induced by the vertices of $G'$ rooted at~$v$ and,
    as before, the parent $\pi_v$ of~$v$.
    In our records, we consider not only the
    paths going through~$v$ but also the
    paths through $e_{C(v)}$, that is, we consider the set of paths
    $\p'(v) := \{p \in \p \mid e_v \in p \lor e_{C(v)} \in p\}$.
    Hence, this requires maintaining $2k$~paths instead of~$k$ paths in a record.
    Furthermore, for every record $B(v)$,
    we maintain an additional entry~$d$.
    The entry $d \in \{0, 1, 2, \ge 3\}$ describes
    the number of bends in $C(v)$ made so far.
    We will use this information to assure that every cycle has at least three bends.
    
    When creating new records,
    we also take the entry~$d$ and each edge~$e_C$ into account
    where~$C$ is a cycle containing the currently considered vertex~$v$.
    Whenever we arrive at a new cycle, i.e., $v = t_{C(v)}$,
    we try all combinations of records and arrangements of incident edges as before,
    but we additionally include the edge $e_{C(v)}$.
    If we align $e_{C(v)}$ and $v\pi_v$, then we set $d = 0$,
    otherwise we set $d = 1$.
    Similarly, let $v$ be a vertex that is contained in a cycle
    where it is not the root and $v \ne t_{C(v)}$.
    Let $v'$ be the child of $v$ in $C(v)$ and let $d'$ be the number
    of bends in $C(v)$ taken from the record $B(v')$ that we use for combining.
    Then, we set $d = d'$ if we align $v'v$ and $v\pi_v$ or,
    otherwise, we set $d = d' + 1$.
    (If $d'$ is set to $\ge 3$, then $d$ is set to~$\ge 3$, too.)
    Now let $v$ be any vertex.
    For each cycle $C$ where $v = r_C$, let $v'$ be the child of $v$ in $C$.
    When aligning edges at~$v$, we include the edge $e_C$.
    Moreover, we combine only records of $v'$ where $d$ is set to $\ge 3$
    or where $d = 2$ if we do not align the edges $e_C$ and $vv'$.
    
    It remains to guarantee the geometric realizability of $G_v$.
    To this end, we do the following test before we accept any new record.
    We proceed similarly as in the proof of~\cref{thm:cactusGeneral-0}.
    Let $(G_v, \mathcal Q)$ be the instance where we split the paths of $\p'(v)$ at bends (which we get by the chosen alignments).
    Note that $\mathcal Q$ contains no pair of conflicting paths.%
    \footnote{We say that two paths $p_1$ and $p_2$ have a conflict if and only if there are three edges $e,e_1,e_2$ sharing a common end vertex such that $p_1$ contains $e$ and $e_1$ while $p_2$ contains $e$ and $e_2$.}
    Let $\mathcal Q'$ be the set of paths constructed from $\mathcal Q$ by merging paths whenever they share an edge.
    Observe that no two paths in $\mathcal Q'$ share two vertices, otherwise there would be a cycle with less than three bends and we would have rejected the record.
    If we aim for a planar drawing and the conditions of \cref{thm:cactusLinearPlanar} are not fulfilled for $(G_v,\mathcal Q')$, then we have to reject the record.
    (If we do not aim for a planar drawing,
    there is a realization due to \cref{thm:cactusLinear}.)
    Observe that we only need to check the conditions of \cref{thm:cactusLinearPlanar} at vertex $v$.
    This is a local condition that is still true for the vertices further down in the tree. \qed
\end{proof}

\section{Appendix: Missing Proof Details for Orthogonal Drawings}
\begin{figure}[t]
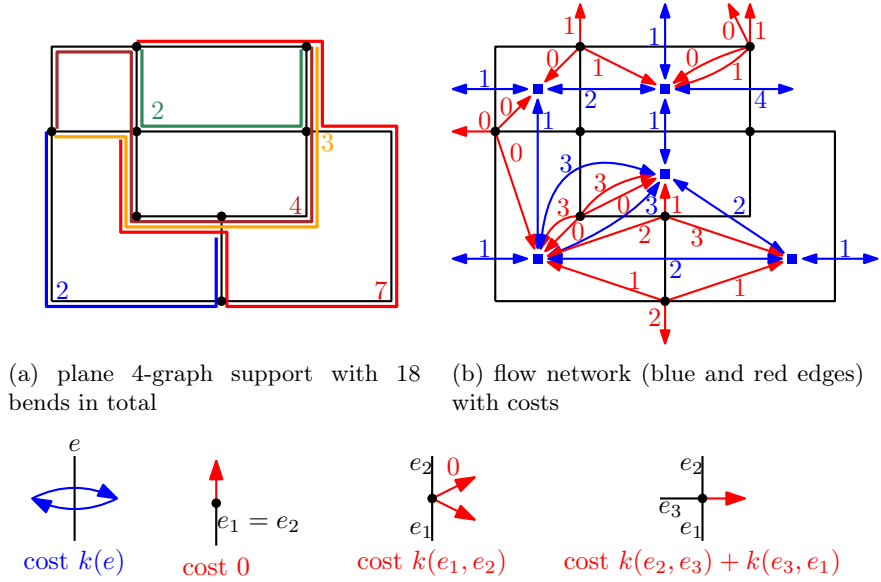

    \centering
    \begin{minipage}[b]{0.45\textwidth}
        \centering
         \includegraphics[page=3]{4plane}
    \subcaption{plane 4-graph support with 18 bends in total\label{fig:4plane-example}}
    \end{minipage}\hfil
    \begin{minipage}[b]{0.45\textwidth}
        \centering
         \includegraphics[page=5]{4plane}
    \subcaption{flow network (blue and red edges) with costs\label{fig:flow_network}}
    \end{minipage}
    
    \medskip
    
    \begin{minipage}[b]{\textwidth}
        \centering
         \includegraphics[page=7]{4plane} \hfil 
         \includegraphics[page=9]{4plane} \hfil 
         \includegraphics[page=15]{4plane} \hfil 
         \includegraphics[page=13]{4plane} \hfil 
    \subcaption{costs in the flow network, $k(e)$ or $k(e_1,e_2)$ denote the number of paths containing $e$ or $e_1,e_2$, respectively\label{fig:costs}}
    \end{minipage}
    
    \caption{(a) A path-based plane 4-graph support (numbers indicate bends per path). (b, c) Flow model for minimizing the total number of bends in all paths.}
    \label{fig:4plane}
\end{figure}
\label{app:tamassia}
\tamassia*
\label{thm:tamassia*}
\begin{proof}
    Similar to the approach of Tamassia \cite{tamassia:87} for bend-minimum orthogonal drawings, we use a min-cost flow approach: The nodes, arcs, capacities, and demands are
    equal to the flow model of Tamassia.
    Only the costs are different.
    An example is given in \cref{fig:flow_network}.
    Each unit of flow represents a bend at an edge  or an angle of $\pi/2$ at a vertex beyond a first default angle of $\pi/2$.    
    The flow network $\mathcal N$ contains a node for each vertex and each face of $G$. The demand of every node representing a vertex $v$ of $G$ in $\mathcal N$ is $\deg(v)-4$, the demand of every node representing an inner face of $G$ is $\deg(f)-4$, and the demand of  the node representing the outer face of $G$ is $\deg(f)+4$. This models the fact that the angular sum around each vertex and the rotation around each inner face is $2\pi$, and the rotation of the outer face is $-2\pi$.
    There are the following two types of arcs;
    see
    \cref{fig:costs}.
    
    \paragraph{Face $\rightarrow$ Face Arcs.} For each edge $e$ of $G$ let $k(e)$ be the number of paths in $\p$ that contain $e$ and $f_1$ and $f_2$ be the faces incident to $e$ (potentially, $f_1=f_2$). 
        There is an arc from $f_1$ to $f_2$ and from $f_2$ to $f_1$ in $\mathcal N$ each with cost $k(e)$ and infinite capacity.  
        The amount of flow on these arcs model the number of convex bends on $e$ in $f_1$ and $f_2$,~respectively.
        
        \paragraph{Vertex $\rightarrow$ Face Arcs.}
        For each vertex $v$ and any two consecutive edges $e_1,e_2$ incident to~$v$, let $k(e_1,e_2)=k(e_2,e_1)$ be the number of paths in $\p$ that contain both $e_1$ and $e_2$. Assume that $e_1$ and $e_2$ are in counter-clockwise order and let 
        $f$ be the face  between $e_1$ and~$e_2$.
        We next describe the arcs in $\mathcal{N}$ between $v$ and $f$.
        A flow of $\phi$ units on these arcs models an angle of $(\phi+1)\pi/2$ between $e_1$ and~$e_2$ in $f$.
        \begin{itemize}
            \item 
            If the degree of $v$ is four, then there is no arc incident to $v$ in $\mathcal{N}$.
            \item 
            If the degree of $v$ is three, then there is one arc $e(e_1,e_2)$ from $v$ to $f$ in $\mathcal{N}$.
            Let $e_3$ be the third edge incident to~$v$.
            The angle between $e_1$ and $e_2$ is at most $\pi$.
            Hence, we set the capacity of $e(e_1,e_2)$ to~1.
            Moreover, if there is one unit of flow on $e(e_1,e_2)$ then $e_1$ and~$e_2$ are aligned
            implying that neither $e_1,e_3$ nor $e_2,e_3$ can be aligned.
            This is modeled by setting the cost of $e(e_1,e_2)$ to $k(e_1,e_3)+k(e_2,e_3)$.
            \item 
            If the degree of $v$ is two, then there are two arcs from $v$ to $f$ in $\mathcal N$,
            one with capacity~1 and cost~0 and the other with capacity~1 and cost~$k(e_1,e_2)$.
            In a cost-optimum drawing, the edge with positive cost would only be used
            if both units of flow out of $v$ go to the same side. In this case all paths over $v$ get a bend at $v$. 
            \item 
            If the degree of $v$ is one, then there is one arc $e(e_1,e_2)$ from $v$ to $f$ in $\mathcal{N}$,
            which has capacity~3 and cost~0. Note that there cannot be bends in degree-1 vertices.
        \end{itemize}
       
    Observe that the total cost of a flow equals the total number of bends on all paths except for bends in degree-four vertices. However, there is no choice anyway. So a min-cost flow yields a drawing with the minimum number of bends.
    Constructing $\mathcal N$ can be done in linear time
    and computing a min-cost flow in $\mathcal N$
    can be done in near-linear time~\cite{brand_etal:focs23}. \qed
\end{proof}

\end{document}